\theoremstyle{plain} 
\newtheorem{theorem}{Theorem}
\theoremstyle{definition}               
\theoremstyle{plain}
\newtheorem{Proposition}[theorem]{Proposition}
\DeclareMathOperator{\Fun}{Fun}
\def\CA {{\cal A}}
\def\CM {{\cal M}}
\def\CN {{\cal N}}
\def\CP {{\cal P}}
\def\CQ {{\cal Q}}
\def\CC {{\mathcal C}}
\def\CD {{\mathcal D}}
\def\CX {{\mathcal X}}
\def\CY {{\mathcal Y}}
\def\CO {{\mathcal O}}
\def\CZ {{\mathcal Z}}
\newcommand{\calC}{{\mathcal C}}
\newcommand{\calD}{{\mathcal D}}
\newcommand{\calW}{{\mathcal W}}
\newcommand{\calZ}{{\mathcal Z}}
\newcommand{\Tr}{{\textrm{Tr}}}
\definecolor{Mathematica1}{rgb}{0.368417, 0.506779, 0.709798}
\definecolor{Mathematica2}{rgb}{0.880722, 0.611041, 0.142051}
\definecolor{Mathematica3}{rgb}{0.560181, 0.691569, 0.194885}
\definecolor{MMcolor}{RGB}{0,153,153}
\definecolor{BSorange}{RGB}{140,50,0}
\begin{document}
\title{\bfseries\Large An Algebraic Theory of Gapped Domain Wall Partons}
\author{Matthew Buican,$^1$ Roman Geiko,$^2$ Milo Moses,$^3$ and Bowen Shi$^{4,\,5,\,6}$}
\affiliation{$^1$CTP and Department of Physics and Astronomy, Queen Mary University of London, London E1 4NS, UK\\ $^2$Department of Physics and Astronomy, University of California, Los Angeles, 90095, CA, USA \\ $^3$Department of Mathematics, California Institute of Technology, Pasadena, CA 91125, USA\\ $^4$Physics Department, University of Illinois at Urbana-Champaign, Urbana, Illinois 61801, USA\\
$^5$Department of Computer Science, University of California, Davis, CA 95616, USA\\
$^6$Department of Physics, University of California at San Diego, La Jolla, CA 92093, USA}
\date{\today}

\begin{abstract}
The entanglement bootstrap program has generated new quantum numbers associated with degrees of freedom living on gapped domain walls between topological phases in two dimensions. Most fundamental among these are the so-called \lq\lq parton" quantum numbers, which give rise to a zoo of composite sectors. In this note, we propose a categorical description of partons. Along the way, we make contact with ideas from generalized symmetries and SymTFT.
\end{abstract}

\maketitle

\section{Introduction}

Starting with the realization that anyons in two-dimensional topologically ordered systems are described by unitary modular tensor categories (uMTCs) \cite{moore1991nonabelions,kitaev2006anyons}, there has been a productive dialogue between category theory and condensed matter physics whereby physical structures in the world of topological order have been interpreted and characterized in terms of abstract algebraic structures. A developing research program known as the {\em entanglement bootstrap program} seeks to recover these categorical classifications of topological order from first principles \cite{ShiKim2021, shi2020fusion, kim2024classifying,immersion2023}.

Recently, one of the present authors in collaboration with Kim extended the entanglement bootstrap program to the case of gapped domain walls between topological phases \cite{ShiKim2021}. Surprisingly, this first-principles approach uncovered a phenomenon which seems to have been absent in the literature up to this point. The main idea is that topological domain wall defects have two internal quantum numbers ---classified by so-called \lq\lq $N$-type" and \lq\lq$U$-type" partons---that partially determine their overall defect type. Moreover, the entanglement bootstrap partons determine the quantum numbers of various composite sectors analogously to the way in which particle physics partons organized the \lq\lq particle zoo" discovered in the mid 20th century. The precise definition, meaning, and practical implications of the parton quantum numbers will be discussed throughout this note.

This state of affairs leaves the theory in an interesting situation. Parton sectors are now understood at the level of first principles, but they have not been interpreted with the machinery of category theory. It is the purpose of this note to remedy the situation by proposing a categorical interpretation of parton sectors. We then make contact with certain quantities discussed in the recent generalized symmetries and SymTFT literature. Finally, in a set of appendices, we discuss examples and some generalizations of the formalism we develop in the main text. We also give a more elaborate review of the entanglement bootstrap, extend some of its ideas, and reinterpret certain additional entanglement bootstrap computations using our formalism.

\begin{figure}
\includegraphics[width=8cm]{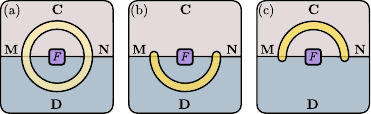}
\caption{The relevant geometries for defining parton sectors. (a) The domain wall defect type, ${\bf F}$, can be measured on an annulus surrounding it. (b-c) The $U$-type ($N$-type) parton sector corresponding to ${\bf F}$ is measured on a half-annulus below (above) the defect.}
\label{parton-diagrams}
\end{figure}

\section{Parton sectors}

In this section we review the definition of parton sectors \cite{ShiKim2021} and the minimal relevant background from the entanglement bootstrap program. As a first step, we recall the entanglement bootstrap approach to bulk anyons \cite{shi2020fusion, shi2019characterizing}.

Suppose we are given a state, $\rho$, corresponding to a topological phase, $\bf C$, with a single anyon in the bulk. The anyon type can be determined by looking at the reduced density matrix on an annulus surrounding the anyon. That is, given two states $\rho$ and $\rho'$ hosting single anyons in $\bf C$ at a distinguished point, $\rho$ and $\rho'$ have the same reduced density matrix on the annulus if and only if the anyons at the distinguished point are of the same type. In this way, anyon types are in bijection with a class of well-behaved states on the annulus \footnote{In the language of the entanglement bootstrap program these well-behaved states are called {\em information convex states}.}. At a basic level, this discussion re-packages the fact that the anyon type in a region is the maximal topological information accessible to measurements localized on an annulus around the region.

It is natural to extend this characterization to domain wall defects. Here, one considers a state, $\rho$, on a space divided into half-planes such that the top half-plane is in phase $\bf C$ and the bottom half-plane is in phase $\bf D$. On the interface between these half-planes, there is a gapped domain wall that is divided into two half-lines, one of type $\bf M$ and one of type $\bf N$. At the intersection of the half-lines, there is a topological defect of type ${\bf F}$ (see Fig. \ref{parton-diagrams}~(a)). As in the case of a bulk anyon, the defect type is characterized by the reduced density matrix on an annulus surrounding it.

The insight behind parton sectors is to observe that, by considering closely related regions, one can resolve the above domain wall defect superselection sectors into finer superselection sectors (see Fig. \ref{parton-diagrams}~(b-c)). Namely, one can consider half-annular regions around the defect that are disconnected on one side of the domain wall. The reduced density matrix on these regions does not fully determine ${\bf F}$, but it does give non-trivial topological information. The information accessible from the half-annulus on the upper (lower) side of the domain wall is known as the $N$-type ($U$-type) parton sector corresponding to the domain wall defect.

\section{The pinching trick}\label{sec:pinching}

In this section, we introduce the {\em pinching trick}, the geometric insight behind our category-theoretic interpretation of parton sectors \footnote{The pinching trick is named so as to distinguish it from the {\em folding trick}, a separate geometric technique for analyzing gapped domain walls.}. Suppose we are given a state corresponding to a single domain wall defect as in Fig. \ref{parton-diagrams}. The pinching trick is a series of topological manipulations of this state and a distinguished region (see Fig. \ref{pinching-trick}). Our initial configuration consists of a straight domain wall and a distinguished region corresponding to a half-annulus on one side of the defect. We then fold the ray corresponding to $\bf M$ by $90$ degrees at two different points to produce a finite-width slice of phase $\bf C$ bounded by $\bf N$ and $\bf M^*$. The distinguished region is then moved until it becomes a disk connecting the two parallel sides of the slice. The segment of the slice containing the defect is finally moved far away from the distinguished region.

\begin{figure}
\includegraphics[width=7.5cm]{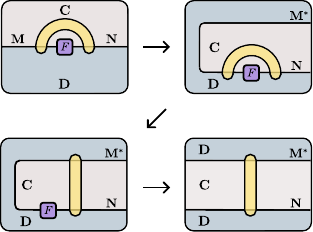}
\caption{The pinching trick. In a series of manipulations, the geometry defining parton sectors is mapped to the geometry of a disk overlapping a composite domain wall.}  
\label{pinching-trick}
\end{figure}

The resulting configuration is an interface between two gapped domain walls. Typically, gapped domain walls have non-abelian fusion rules. Clearly, the information contained in the region between the two domain walls is exactly the information describing which fusion channel the two domain walls take when they are brought together. There is a well-developed algebraic machinery for analyzing the fusion of domain walls that gives a handle on the information contained in parton sectors \cite{Kitaev_2012, huston2023composing}.

We now further explain the pinching trick and the physical relevance of parton sectors through a gedanken experiment (see Fig. \ref{gendanken}). We imagine an experimental setup where a topologically ordered state is prepared with the following geometry. The state is inhomogeneous, consisting of two phases separated by a domain wall. On the domain wall, there are two well-separated domain wall defects. After some distance in either direction, the domain wall turns and the two sides of the domain wall approach each other in a junction region. Due to interaction effects between these nearby sections of the domain wall, we expect that the state will decohere~\cite{Zurek2003} and be projected onto a state corresponding to a well-defined fusion channel between the two halves of the domain wall.

\begin{figure}
\includegraphics[width=4.5cm]{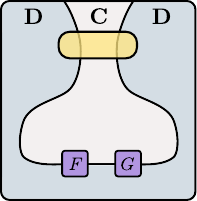}
\caption{An inhomogeneous topological state with two bulk phases, ${\bf C}$ and ${\bf D}$, separated by a curved domain wall. Near the junction region where the sides of the domain wall approach each other, local measurements can probe partial information on the overall fusion channel of the domain wall defects ${\bf F}$ and ${\bf G}$.}
\label{gendanken}
\end{figure}

Without the junction region, the system hosts delocalized logical degrees of freedom in the unconstrained fusion space between the two domain wall defects. On the other hand, in the presence of the junction, these delocalized degrees of freedom are forced to take values in a sector in which the fusion of the two defects has a definite $N$-type parton quantum number. Therefore, the $N$-type parton acts as a superselection sector for the domain wall defects. 

In principle, the parton sector measurements made possible by controlling these sorts of junctions should be topologically protected. As such, they can be included in the topological quantum computing toolkit. This idea is especially relevant in light of the fact that some models of quantum computation with gapped domain walls are only known to be universal when additional measurement primitives are included \cite{cong2017universal} and that gapped defects can be coherently manipulated on quantum simulation platforms \cite{iqbal2024qutrit, xu2023digital}.

\section{Partons via categories}\label{Sec:PartonsViaCats}

In this section, we propose an algebraic characterization of partons based on Kitaev and Kong's theory of gapped domain walls for two-dimensional topologically ordered systems~\cite{Kitaev_2012}. As such, we restrict ourselves to states in the same topological phase as Levin-Wen string nets. The outline of Kitaev and Kong's theory is summarized in Table \ref{Kitaev-Kong}.

From this picture, we can interpret our physical setup in terms of categories as follows. The bulk phases, $\bf C$ and $\bf D$, correspond to unitary fusion categories $\mathcal{C}$ and $\mathcal{D}$. The domain walls, $\bf N$ and $\bf M$, correspond to indecomposable $(\mathcal{C},\mathcal{D})$-bimodule categories $\mathcal{M}$ and $\mathcal{N}$. The defect, $\bf F$, corresponds to an irreducible $(\mathcal{C},\mathcal{D})$-bimodule functor $F:\mathcal{M}\to\mathcal{N}$.

\begin{table}[h]
\begin{tabular}{ |p{3.7cm}||p{3.7cm}|  }
 \hline
 Levin-Wen model& Category theory\\
 \hline
 Bulk Levin-Wen model   & Unitary fusion category\\
 Gapped domain wall & Bimodule category  \\
 Domain wall defect & Bimodule functor \\
 Composition of domain walls  & Relative tensor product \\
 \hline
\end{tabular}
\caption{The Kitaev-Kong dictionary for domain walls.}
\label{Kitaev-Kong}
\end{table}

We can now give an algebraic perspective on the pinching trick. To that end, note that once we rotate the left half of the domain wall to produce the parallel configuration in Fig. \ref{pinching-trick}, the parallel walls can be treated as a composite domain wall, $\mathcal{M}^* \boxtimes_{\mathcal{C}} \mathcal{N}$. Here, $\mathcal{M}^*$ denotes the opposite category, $\mathcal{M}^{\rm op}$, treated as a $(\mathcal{C},\mathcal{D})$-bimodule, which physically corresponds to reversing the orientation of $\bf M$, and $\boxtimes_{\mathcal{C}}$ is the relative Deligne tensor product over $\mathcal{C}$.

After performing this maneuver, ${\bf F}$ can subsequently be treated as a defect between the trivial domain wall in $\bf D$ and the composite domain wall. Mathematically, this statement corresponds to the fact that every $(\mathcal{C},\mathcal{D})$-bimodule functor, $F:\mathcal{M}\to\mathcal{N}$, induces a $(\mathcal{D},\mathcal{D})$-bimodule functor $\mathcal{D}\to \mathcal{M}^* \boxtimes_{\mathcal{C}} \mathcal{N}$ by Frobenius reciprocity \cite[Corollary 3.22]{greenough2010monoidal} \footnote{We use the following chain of equivalences
$ \Fun_{\CC|\CD}(\CM,\CN)\xrightarrow{\sim} \Fun_{\CC|\CD}(\CM\boxtimes_{\CD}\CD,\CN)\xrightarrow{\sim}   
    \Fun_{\CD|\CD}(\CD,\Fun_{\CC|\CD}(\CM,\CN))\xrightarrow{\sim} \Fun_{\CD|\CD}(\CD,\CM^*\boxtimes_{\CC}\CN)$}. By abuse of notation, we also call this induced functor $F$.

The final step of the pinching trick is to observe that the information contained in the half-annular region around $\bf F$ is exactly the information characterizing which simple domain wall the composite domain wall fuses to. The possible fusion channels for the composite domain wall $\mathcal{M}^* \boxtimes_{\mathcal{C}} \mathcal{N}$ correspond to indecomposable sub-$(\mathcal{D},\mathcal{D})$-bimodules of $\mathcal{M}^* \boxtimes_{\mathcal{C}} \mathcal{N}$ \cite{huston2023composing}. Mathematically, this means that we need to identify a distinguished indecomposable bimodule subcategory of $\mathcal{M}^* \boxtimes_{\mathcal{C}} \mathcal{N}$ corresponding to the functor $F$.

To identify the correct indecomposable sub-bimodule, let us recall the general algebraic theory of domain walls. The trivial defect on the composite domain wall, $\mathcal{M}^* \boxtimes_{\mathcal{C}} \mathcal{N}$, can be decomposed using a resolution of the identity, where each term is a projection onto an indecomposable bimodule subcategory followed by the inclusion of the subcategory. This logic is depicted using string diagrams as follows:

\begin{equation}
\label{resolution-of-id}
\tikzfig{diagrams/composite-domain-wall}=\sum_{\mathfrak{n}\subseteq \mathcal{N}^*\boxtimes_{\mathcal{C}}\mathcal{M}}\tikzfig{diagrams/domain-wall-fusion}~.
\end{equation}

Measuring the type of a composite domain wall corresponds to measuring with respect to the resolution in \eqref{resolution-of-id}. In our present situation, the only non-vanishing term in the decomposition is the subcategory in the image of $F$. As such, the bimodule subcategory associated with $F$ must be its image:

\begin{equation}
\label{n-type}
\mathfrak{n}_F=\text{Im}(F:\mathcal{D}\to \mathcal{M}^* \boxtimes_{\mathcal{C}} \mathcal{N})~.
\end{equation}

We call $\mathfrak{n}_F$ the \textit{$N$-type parton associated with $F$}. It is straightforward to verify that $\mathfrak{n}_F$ is an indecomposable bimodule subcategory using the fact that $F:\mathcal{C}\to\mathcal{D}$ is an irreducible bimodule functor. As such, we see that $\mathfrak{n}_F$ is a valid choice of fusion channel.

Arguing in an analogous fashion, we see that $U$-type parton sectors contain the information of an indecomposable $(\mathcal{C},\mathcal{C})$-bimodule subcategory of $\mathcal{N}\boxtimes_{\mathcal{D}}\mathcal{M}^*$. Associating a functor, $F:\mathcal{C}\to \mathcal{N}\boxtimes_{\mathcal{D}}\mathcal{M}^*$, with $F:\mathcal{M}\to\mathcal{N}$ via Frobenius reciprocity, the $U$-type parton corresponding to $F$ is 

\begin{equation}
\label{u-type}
\mathfrak{u}_F=\text{Im}(F:\mathcal{C}\to \mathcal{N}\boxtimes_{\mathcal{D}}\mathcal{M}^*)~.
\end{equation}

We summarize our proposal in table \ref{parton-classification}, maintaining our previously established conventions.

\begin{table}[h]
\begin{tabular}{ |p{3.7cm}||p{3.7cm}|  }

 \hline
 Entanglement bootstrap& Category theory\\
 \hline
$N$-type parton sectors   & Indecomposable $(\mathcal{D},\mathcal{D})$-bimodule subcategories of $\mathcal{M}^* \boxtimes_{\mathcal{C}} \mathcal{N}$\\
 $U$-type parton sectors & Indecomposable $(\mathcal{C},\mathcal{C})$-bimodule subcategories of $\mathcal{N}\boxtimes_{\mathcal{D}}\mathcal{M}^*$ \\
 $N$-type parton sector associated with a defect, F& $\, \mathfrak{n}_F\subseteq \mathcal{M}^* \boxtimes_{\mathcal{C}} \mathcal{N}$ (\ref{n-type}) \\
$U$-type parton sector associated with a defect, F  &  
 $\,\mathfrak{u}_F\subseteq \mathcal{N}\boxtimes_{\mathcal{D}}\mathcal{M}^*$ (\ref{u-type}) \\
 \hline
\end{tabular}
\caption{Our proposed algebraic theory of partons.}
\label{parton-classification}
\end{table}

\section{Grading by partons}\label{grading}

By construction, the space of domain wall defects can be decomposed into different parton sectors. Mathematically, this decomposition endows the fusion category of domain wall excitations with a structure reminiscent of a grading. In this section we develop this picture further and demonstrate that the neutral component of this grading can be endowed with the structure of a modular tensor category.

As before, let $\mathcal{C}$, $\mathcal{D}$ be unitary fusion categories, and let $\mathcal{M}$ be an indecomposable $(\mathcal{C},\mathcal{D})$-bimodule. The fusion category, $\mathcal{E}:=\Fun_{\mathcal{C}|\mathcal{D}}(\mathcal{M},\mathcal{M})$, of bimodule endofunctors of $\mathcal{M}$ describes the excitations on the domain wall labeled by $\mathcal{M}$.

We introduce some notation, roughly following \cite{ShiKim2021}. Define $\mathcal{L}_N$ ($\mathcal{L}_U$) to be the set of indecomposable $(\mathcal{D},\mathcal{D})$-bimodule subcategories ($(\mathcal{C},\mathcal{C})$-bimodule subcategories) of $\mathcal{M}^*\boxtimes_{\mathcal{C}}\mathcal{M}$ ($\mathcal{M}\boxtimes_{\mathcal{D}}\mathcal{M}^*$), and let $\mathcal{L}_{O}^{[\mathfrak{n},\mathfrak{u}]}$ be the set of irreducible bimodule functors, $F:\mathcal{M}\to\mathcal{M}$, such that $\mathfrak{n}_F=\mathfrak{n}\in\mathcal{L}_N$ and $\mathfrak{u}_F=\mathfrak{u}\in\mathfrak{L}_U$. We denote by ${\bf 1}\in \mathcal{L}_N$ (${\bf 1}\in \mathcal{L}_U$) the bimodule category generated by the identity functor in $\mathcal{M}^*\boxtimes_{\mathcal{C}}\mathcal{M}\simeq \Fun_{\mathcal{C}}(\mathcal{M},\mathcal{M})$ ($\mathcal{M}\boxtimes_{\mathcal{D}}\mathcal{M}^*\simeq \Fun_{\mathcal{D}}(\mathcal{M},\mathcal{M})$).

 For each $\mathfrak{n}\in \mathcal{L}_N$, $\mathfrak{u}\in \mathcal{L}_U$, let $\mathcal{E}^{[\mathfrak{n},\mathfrak{u}]}\subseteq \mathcal{E}$ be the full subcategory additively generated by the simple objects in $\mathcal{L}_O^{[\mathfrak{n},\mathfrak{u}]}$. Define 

\begin{equation}
 \mathcal{E}^{[\bullet,{\bf1}]}:=\bigoplus_{\mathfrak{n}\in \mathcal{L}_N}\mathcal{E}^{[\mathfrak{n},{\bf 1}]}~,\quad\mathcal{E}^{[{\bf 1},\bullet]}:=\bigoplus_{\mathfrak{u}\in \mathcal{L}_U}\mathcal{E}^{[{\bf 1},\mathfrak{u}]}~.
 \end{equation}

Acting on the left and right respectively, there are induced functors from $\mathcal{Z}(\mathcal{C})$ and $\mathcal{Z}(\overline{\mathcal{D}})$ to $\mathcal{E}$. Unpacking definitions, one can see that $ \mathcal{E}^{[\bullet,{\bf1}]}$ and $\mathcal{E}^{[{\bf 1},\bullet]}$ are equal to the images of $\mathcal{Z}(\mathcal{C})$ and $\mathcal{Z}(\overline{\mathcal{D}})$. As such, $\mathcal{E}^{[\bullet,{\bf1}]}$, $\mathcal{E}^{[{\bf 1},\bullet]}$, and $\mathcal{E}^{[{\bf 1},{\bf 1}]}=\mathcal{E}^{[\bullet,{\bf1}]}\cap \mathcal{E}^{[{\bf 1},\bullet]}$ are all fusion subcategories of $\mathcal{E}$. The following decomposition acts similarly to a grading on $\mathcal{E}$:
\begin{equation}
    \mathcal{E}=\bigoplus_{\substack{\mathfrak{n}\in \mathcal{L}_N \\ \mathfrak{u}\in\mathcal{L}_U}}\mathcal{E}^{[\mathfrak{n},\mathfrak{u}]}~.
\end{equation}

We now turn our attention to the neutral component,  $\mathcal{E}^{[{\bf 1},{\bf 1}]}\subseteq  \mathcal{E}$. A priori, this component is only endowed with the structure of a fusion subcategory. However, we will argue that $\mathcal{E}^{[{\bf 1},{\bf 1}]}$ can be naturally equipped with a non-degenerate braiding, making it a uMTC. The physical interpretation of this braiding is as follows. The $\mathcal{E}^{[{\bf 1},{\bf 1}]}$ sectors, by construction, are exactly the sectors which can be pulled into the bulk in either direction. The braiding is defined by pulling one of the excitations into the bulk one direction, bringing the other other excitation into the bulk the other direction, braiding in the bulk, and then going back to the domain wall. Proposition \ref{braiding-well-defined} explains mathematically why this braiding does not depend on the way we pull the domain wall sectors into the bulk.

Let $A,B\in \mathcal{E}^{[{\bf 1},{\bf 1}]}$ be objects. Since $A\in \mathcal{E}^{[\bullet ,\bf 1]}$, we know that there exists some object, $C\in \mathcal{Z}(\mathcal{C})$, and an inclusion $A\hookrightarrow{} F_C$ where $F_C:\mathcal{M}\to \mathcal{M}$ is the functor $M\mapsto C\otimes M$. This discussion is a re-statement of the fact that $\mathcal{E}^{[\bullet ,\bf 1]}$ is the image of $\mathcal{Z}(\mathcal{C})$. Similarly, since $B\in \mathcal{E}^{[\bf 1, \bullet]}$, we know that there exists some object $D\in\mathcal{Z}(\mathcal{D})$ and an inclusion $B\hookrightarrow{}F_D$ where $F_D:\mathcal{M}\to\mathcal{M}$ is the functor $M\mapsto M\otimes D$. We can now define a map $\beta_{A,B}:A\otimes B\to B\otimes A$ as the unique morphism making the following diagram commute:

\begin{equation}
\label{braiding-def}
\begin{tikzcd}
	{A\otimes B} & {B\otimes A} \\
	{F_C\otimes F_D} & {F_D\otimes F_C}
	\arrow["{\beta_{A,B}}", from=1-1, to=1-2]
	\arrow[from=1-1, to=2-1]
	\arrow["{b_{C,D}}", from=2-1, to=2-2]
	\arrow[from=2-2, to=1-2]
\end{tikzcd}    
\end{equation}

Here, $b_{C,D}$ is the bimodule associativity map associated with $\mathcal{M}$,

\begin{equation}
    b_{C,D}:C\otimes (\bullet \otimes D)\xrightarrow{} (C\otimes \bullet )\otimes D~.
\end{equation}

\begin{Proposition}\label{braiding-well-defined} The maps $\beta_{A,B}$ defined in (\ref{braiding-def}) are independent of the choices of $C\in\mathcal{Z}(\mathcal{C})$ and $D\in\mathcal{Z}(\mathcal{D})$, and they endow the category $\mathcal{E}^{[\bf 1,\bf 1]}$ with the structure of a braided unitary fusion category.
\end{Proposition}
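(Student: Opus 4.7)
The plan is to show, in sequence, that the map $\beta_{A,B}$ of the diagram (\ref{braiding-def}) exists and factors through $B\otimes A$; that it is independent of the chosen lifts $C\in\mathcal{Z}(\mathcal{C})$ and $D\in\mathcal{Z}(\overline{\mathcal{D}})$; that it satisfies the two hexagon axioms; and that it is unitary. The key input is that, as recalled just before the proposition, $\mathcal{E}^{[\bullet,\mathbf{1}]}$ and $\mathcal{E}^{[\mathbf{1},\bullet]}$ are the essential images of the central functors $\mathcal{Z}(\mathcal{C})\to\mathcal{E}$ and $\mathcal{Z}(\overline{\mathcal{D}})\to\mathcal{E}$, so every $A\in\mathcal{E}^{[\bullet,\mathbf{1}]}$ sits as a sub-bimodule-functor of some $F_C$, and similarly for $B\in\mathcal{E}^{[\mathbf{1},\bullet]}$.

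For existence and factorization I would argue as follows. The right-$\mathcal{D}$-linearity of $A$ defines a natural isomorphism $\mu_A(D):A\circ F_D\xrightarrow{\sim}F_D\circ A$, and since the inclusion $\iota_A:A\hookrightarrow F_C$ is a bimodule morphism it intertwines this linearity with the bimodule associator $b_{C,D}$: one has $b_{C,D}\circ(\iota_A\circ\mathrm{id})=(\mathrm{id}\circ\iota_A)\circ\mu_A(D)$. Restricting further along $\iota_B:B\hookrightarrow F_D$, the composite $\mu_A(D)\circ(\mathrm{id}_A\circ\iota_B):A\circ B\to F_D\circ A$ is manifestly well-defined. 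The symmetric construction using the left-$\mathcal{C}$-linearity of $B$ produces a parallel map $A\circ B\to B\circ F_C$ that involves no $D$. A careful diagram chase in the bicategory of $(\mathcal{C},\mathcal{D})$-bimodule functors, using that both $A$ and $B$ lie in $\mathcal{E}^{[\mathbf{1},\mathbf{1}]}$, then shows that the two constructions agree and that their common image lies in the subobject $B\circ A\hookrightarrow F_D\circ F_C$. This defines the unique $\beta_{A,B}$ making (\ref{braiding-def}) commute and simultaneously proves independence: the formula $\beta_{A,B}=\mu_A(D)|_{A\circ B}$ is manifestly free of $C$, while its symmetric partner is free of $D$. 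Conceptually, this endows each $A\in\mathcal{E}^{[\mathbf{1},\mathbf{1}]}$ with an intrinsic half-braiding $\sigma_A(Y):A\circ Y\xrightarrow{\sim}Y\circ A$ for $Y\in\mathcal{E}^{[\mathbf{1},\mathbf{1}]}$, with $\beta_{A,B}=\sigma_A(B)$.

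The hexagon axioms then follow from the bimodule pentagon for $b$: the identity $\beta_{A_1\otimes A_2,B}=(\beta_{A_1,B}\otimes\mathrm{id})\circ(\mathrm{id}\otimes\beta_{A_2,B})$ reduces, via the formula above, to the compatibility of right-$\mathcal{D}$-linearity with horizontal composition of bimodule functors; this is a standard coherence identity for $(\mathcal{C},\mathcal{D})$-bimodule categories. The dual hexagon is symmetric. Invertibility of $\beta_{A,B}$ is immediate since $b_{C,D}$ is an isomorphism, and unitarity is inherited from the unitary bimodule structure of the Kitaev-Kong setting, in which $b$ is a unitary natural isomorphism and the dagger on $\mathcal{E}$ is compatible with functor composition.

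I expect the main obstacle to be the factorization step: showing that the image of $A\circ B$ under $b_{C,D}$ actually lies in $B\circ A$ rather than in some other sub-bimodule-functor of $F_D\circ F_C$. The argument above routes this through the simultaneous use of $A$'s right-$\mathcal{D}$-linearity and $B$'s left-$\mathcal{C}$-linearity, which is exactly why it is essential that $A$ and $B$ both lie in $\mathcal{E}^{[\mathbf{1},\mathbf{1}]}$ and not merely in one of the two fusion subcategories. Carrying this out rigorously requires some care with the 2-naturality of bimodule-functor composition, in particular with respect to bimodule endomorphisms of $F_D$ that may not lift to endomorphisms in $\mathcal{Z}(\overline{\mathcal{D}})$. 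Once this factorization is nailed down, everything else in the proposition follows formally from bimodule coherence.
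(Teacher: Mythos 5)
Your proposal is correct and lands on the same braiding, but it is organized differently from the paper's proof. The paper never isolates your factorization step: it takes $\beta_{A,B}$ to be the composite (include $A\otimes B$ into $F_C\otimes F_D$) $\circ\, b_{C,D}\,\circ$ (project onto $B\otimes A$), and proves independence of the lifts by connecting two choices $(C,D)$ and $(C',D')$ through the morphisms $f_A=\iota'_A\, p_A\colon F_C\to F_{C'}$ and $f_B=\iota'_B\, p_B\colon F_D\to F_{D'}$ and checking the single square $b_{C',D'}\circ(f_A\otimes f_B)=(f_B\otimes f_A)\circ b_{C,D}$, which follows from the compatibility of $f_A$ and $f_B$ with the bimodule associator; the braiding axioms are then asserted as a routine coherence check. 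You instead build the map intrinsically from the right-$\mathcal{D}$-linearity of $A$ and the left-$\mathcal{C}$-linearity of $B$, prove it factors through $B\otimes A$, and get independence for free because one presentation never mentions $C$ and the other never mentions $D$; what this buys is that invertibility of $\beta_{A,B}$ (left implicit in the paper, and genuinely dependent on the factorization if one uses the projection definition) comes out automatically, and the hexagons reduce to coherence of module-functor structures under composition. The obstacle you flag is not a real gap: it is handled by exactly the same one-square lemma the paper uses, in the slightly stronger form that for \emph{arbitrary} morphisms $f\colon F_C\to F_{C'}$ and $g\colon F_D\to F_{D'}$ in $\mathcal{E}$ (not necessarily lifted from $\mathcal{Z}(\mathcal{C})$ or $\mathcal{Z}(\overline{\mathcal{D}})$) one has $b_{C',D'}\circ(f\otimes g)=(g\otimes f)\circ b_{C,D}$, proved by combining the right-module compatibility of $f$, the left-module compatibility of $g$, and ordinary naturality of $g$ in its $\mathcal{M}$-argument. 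Applying it with $f=e_A=\iota_A p_A$ and $g=e_B=\iota_B p_B$ gives $b_{C,D}\circ(\iota_A\otimes\iota_B)=(e_B\otimes e_A)\circ b_{C,D}\circ(\iota_A\otimes\iota_B)$, so the image of $A\otimes B$ lies in $B\otimes A$, and the same identity shows your two symmetric presentations agree after the (monic) embedding into $F_D\otimes F_C$, hence agree outright. Note that carrying this out does use the embedding $A\hookrightarrow F_C$ as well as $B\hookrightarrow F_D$, confirming your expectation that membership of both objects in $\mathcal{E}^{[\mathbf{1},\mathbf{1}]}$, and not just in one of the two subcategories, is what makes the construction work.
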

\begin{proof} To show that $\beta_{A,B}$ is independent of choices, we must show that for $C,C'\in \mathcal{Z}(\mathcal{C})$ and $D,D'\in \mathcal{Z}(\mathcal{D})$ the maps $\beta_{A,B}$ associated with $C,D$ and $C',D'$ are the same. Consider the bimodule functors $f_A:F_C\to F_{C'}$ and $f_B:F_D\to F_{D'}$ defined by the compositions

\begin{equation}
\begin{tikzcd}
	{F_C} & {F_{C'}} && {F_D} & {F_{D'}} \\
	& A &&& B
	\arrow["{f_A}", from=1-1, to=1-2]
	\arrow[from=1-1, to=2-2]
	\arrow["{f_B}", from=1-4, to=1-5]
	\arrow[from=1-4, to=2-5]
	\arrow[hook, from=2-2, to=1-2]
	\arrow[hook, from=2-5, to=1-5]
\end{tikzcd}
\end{equation}

Unpacking definitions, we find it suffices to demonstrate commutativity of the square

\begin{equation}
\begin{tikzcd}
	{F_C\otimes F_D} & {F_{D}\otimes F_C} \\
	{F_{C'}\otimes F_{D'}} & {F_{D'}\otimes F_{C'}}
	\arrow["{b_{C,D}}", from=1-1, to=1-2]
	\arrow["{f_A\otimes f_B}"', from=1-1, to=2-1]
	\arrow["{f_B\otimes f_A}", from=1-2, to=2-2]
	\arrow["{b_{C',D'}}"', from=2-1, to=2-2]
\end{tikzcd}
\end{equation}

Breaking up the square into a first application of $f_A$ and a second application of $f_B$, the commutativity comes from first applying the compatibility of $f_A$ with bimodule associativity and then applying the compatibility of $f_B$ with bimodule associativity.

The fact that $\beta_{A,B}$ satisfies the axioms of a braiding is a straightforward computation using the compatibility of the bimodule associativity with the bimodule actions on each side.
\end{proof}

We now give a physical argument for why the braiding on $\mathcal{E}^{[1,1]}$ should be non-degenerate. It is a well-known fact that every domain wall can be decomposed by first performing anyon condensation, then acting with an invertible domain wall, and finally performing reverse anyon condensation, as shown in Fig.~\ref{fig:sandwich}. The key observation is that the domain wall excitations that can be pulled into the bulk on either side are exactly the domain wall excitations that can live in the middle condensed phase.  As such, $\mathcal{E}^{[{\bf 1},{\bf 1}]}$ describes anyons in the condensed phase and should therefore have non-degenerate braiding. This argument can be made mathematically precise by the following result:

\begin{figure}
    \centering   \includegraphics[width=0.65\linewidth]{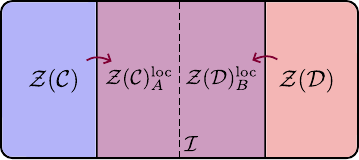}
    \caption{Every gapped domain wall between two string-net models can be obtained by juxtaposing two condensation domain walls and an invertible domain wall. This figure is similar to \cite[Fig. 7]{Huston2022}.}
    \label{fig:sandwich}
\end{figure}

\begin{Proposition}\label{non-degeneracy-theorem} Suppose that there is a $(\mathcal{C},\mathcal{D})$-bimodule equivalence $\mathcal{M}\cong\mathcal{C}_A\boxtimes_{\mathcal{C}_A^{\text{loc}}} \mathcal{I}\boxtimes_{\mathcal{D}_B^{\text{loc}}} \mathcal{D}_B$ where $A\in \mathcal{C}$, $B\in \mathcal{D}$ are separable commutative algebras. Then, there are equivalences of unitary braided fusion categories

\begin{equation}
    \mathcal{E}^{[{\bf 1}, {\bf 1}]}\cong \mathcal{Z}(\mathcal{C})^{\mathrm{loc}}_{A}\cong \mathcal{Z}(\mathcal{D})^{\mathrm{loc}}_B~.
\end{equation}

In particular, the braiding on $\mathcal{E}^{[\bf 1,\bf 1]}$ is always non-degenerate.
\end{Proposition}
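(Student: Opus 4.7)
The plan is to unpack the sandwich decomposition of $\mathcal{M}$ factor by factor, identify the images $\mathcal{E}^{[\bullet,{\bf 1}]}$ and $\mathcal{E}^{[{\bf 1},\bullet]}$ as arising from anyon condensation functors, and then take their intersection. Physically, the three factors play distinct roles: the left condensation wall $\mathcal{C}_A$ interpolates from $\mathcal{C}$ to the condensed middle phase with anyons $\mathcal{C}_A^{\mathrm{loc}}$, the invertible wall $\mathcal{I}$ identifies the two middle phases, and the right condensation wall $\mathcal{D}_B$ interpolates back from $\mathcal{D}_B^{\mathrm{loc}}$ to $\mathcal{D}$.

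First I would describe $\mathcal{E}^{[\bullet,{\bf 1}]}$, the image of the central functor $\mathcal{Z}(\mathcal{C})\to\mathcal{E}$. An object $X\in\mathcal{Z}(\mathcal{C})$ acts on $\mathcal{M}$ by tensoring on the leftmost factor, and the half-braiding on $X$ is exactly the data required for this action to descend through the relative tensor product $\boxtimes_{\mathcal{C}_A^{\mathrm{loc}}}$. Using the standard fact that the composite $\mathcal{Z}(\mathcal{C})\to\mathcal{C}\to\mathcal{C}_A$, $X\mapsto X\otimes A$, lands in $\mathcal{C}_A^{\mathrm{loc}}$ precisely on $\mathcal{Z}(\mathcal{C})^{\mathrm{loc}}_A$, together with the observation that the remaining two factors $\mathcal{I}\boxtimes_{\mathcal{D}_B^{\mathrm{loc}}}\mathcal{D}_B$ act as an equivalence (invertibility of $\mathcal{I}$ and Morita-faithfulness of $\mathcal{D}_B$ from the condensed side), I would identify $\mathcal{E}^{[\bullet,{\bf 1}]}\cong\mathcal{Z}(\mathcal{C})^{\mathrm{loc}}_A$ as fusion subcategories. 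A symmetric computation from the right gives $\mathcal{E}^{[{\bf 1},\bullet]}\cong\mathcal{Z}(\mathcal{D})^{\mathrm{loc}}_B$.

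For the intersection, note that the invertibility of $\mathcal{I}$ provides a braided equivalence $\mathcal{C}_A^{\mathrm{loc}}\cong\mathcal{D}_B^{\mathrm{loc}}$, under which the two images inside $\mathcal{E}$ are canonically matched. Hence
\begin{equation*}
\mathcal{E}^{[{\bf 1},{\bf 1}]}=\mathcal{E}^{[\bullet,{\bf 1}]}\cap\mathcal{E}^{[{\bf 1},\bullet]}\cong \mathcal{Z}(\mathcal{C})^{\mathrm{loc}}_A \cong \mathcal{Z}(\mathcal{D})^{\mathrm{loc}}_B,
\end{equation*}
and non-degeneracy of the braiding then follows as a corollary from the classical theorem that $\mathcal{Z}(\mathcal{C})^{\mathrm{loc}}_A$ is a modular tensor category whenever $A$ is a connected commutative separable algebra in the modular category $\mathcal{Z}(\mathcal{C})$.

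The hardest part will be verifying that the abstract braiding $\beta_{A,B}$ constructed in Proposition \ref{braiding-well-defined} matches the standard braiding on $\mathcal{Z}(\mathcal{C})^{\mathrm{loc}}_A$ under the identification above. Concretely, this requires unwinding the bimodule associator $b_{C,D}$ through all three factors of the sandwich and checking that it implements the half-braiding of $\mathcal{Z}(\mathcal{C})$ restricted to the local modules of $A$. I expect this to reduce, after a diagrammatic chase, to the compatibility between half-braidings and relative tensor products over $\mathcal{C}_A^{\mathrm{loc}}$; the technical core of the proof will be the bookkeeping across the three sandwich layers rather than any individual step.
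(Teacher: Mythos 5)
Your route has a genuine gap at its first main step: the claim that $\mathcal{E}^{[\bullet,{\bf 1}]}\cong \mathcal{Z}(\mathcal{C})^{\mathrm{loc}}_A$ (and symmetrically $\mathcal{E}^{[{\bf 1},\bullet]}\cong\mathcal{Z}(\mathcal{D})^{\mathrm{loc}}_B$) as fusion subcategories is false in general. By definition $\mathcal{E}^{[\bullet,{\bf 1}]}$ is the image of \emph{all} of $\mathcal{Z}(\mathcal{C})$ acting on the left, and this image contains representatives of every $N$-type parton sector --- that is exactly why the grading of Section \ref{grading} is nontrivial. A concrete counterexample: take $\mathcal{C}=\mathrm{Vec}_{\mathbb{Z}_2}$, $\mathcal{D}=\mathrm{Vec}$, $\mathcal{M}=\mathrm{Vec}$ (the $e$-condensed boundary of the toric code, $A=1\oplus e$). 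Then $\mathcal{E}$ has two simple objects, the image of $\mathcal{Z}(\mathcal{C})=D(\mathbb{Z}_2)$ is all of $\mathcal{E}$, so $\mathcal{E}^{[\bullet,{\bf 1}]}$ has two simples, while $\mathcal{Z}(\mathcal{C})^{\mathrm{loc}}_A\cong\mathrm{Vec}$ has one. If your identification held, one would have $\mathcal{E}^{[\bullet,{\bf 1}]}=\mathcal{E}^{[{\bf 1},{\bf 1}]}$ and hence a single $N$-type parton for every wall, contradicting, e.g., the ${\rm Rep}(G)$ condensation example with $|G|$ $N$-type partons. The free-module functor $X\mapsto X\otimes A$ landing in local modules only on $\mathcal{Z}(\mathcal{C})^{\mathrm{loc}}_A$ does not restrict the image in $\mathcal{E}$, because objects of $\mathcal{Z}(\mathcal{C})$ outside that subcategory still act nontrivially on $\mathcal{M}$ --- they simply carry nontrivial $N$-type parton. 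Consequently the "canonically matched intersection" step collapses: the whole content of the proposition is precisely that intersecting with $\mathcal{E}^{[{\bf 1},\bullet]}$ cuts the (generally larger) image of $\mathcal{Z}(\mathcal{C})$ down to the local modules, and that is what must be proven rather than assumed.

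The paper's proof establishes this by a different mechanism: it builds a monoidal functor $\Fun_{\mathcal{X}|\mathcal{Y}}(\mathcal{I},\mathcal{I})\to\Fun_{\mathcal{C}|\mathcal{D}}(\mathcal{M},\mathcal{M})$ from endofunctors of the invertible \emph{middle} wall, shows its image is exactly $\mathcal{E}^{[{\bf 1},{\bf 1}]}$ (using the universal property of the relative Deligne product to see that such functors act only through the $\mathcal{C}_A$ factor, and conversely), and then uses Frobenius reciprocity plus invertibility of $\mathcal{I}$ to identify $\Fun_{\mathcal{X}|\mathcal{Y}}(\mathcal{I},\mathcal{I})\cong\Fun_{\mathcal{X}|\mathcal{X}}(\mathcal{X},\mathcal{X})\cong\mathcal{Z}(\mathcal{C})^{\mathrm{loc}}_A$, tracking the braiding through these equivalences. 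Your proposal also leaves the braiding comparison with $\beta_{A,B}$ as an expectation rather than an argument, and it does not address the final "always non-degenerate" clause, which requires the structure theorem (cited in the paper as \cite[Proposition 3.6]{davydov2013structure}) that \emph{every} indecomposable bimodule admits the sandwich form; these are secondary to the main gap above but would also need to be supplied.
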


\begin{proof} Consider the following monoidal functor, obtained by using the naturality of the relative Deligne tensor product:

\begin{equation}
\label{tensor-functor}
\Fun_{\mathcal{X}|\mathcal{Y}}(\mathcal{I},\mathcal{I})\xrightarrow{}\Fun_{\mathcal{C}|\mathcal{D}}(\mathcal{M},\mathcal{M})~,
\end{equation}
where $Z(\mathcal{X})=\mathcal{Z}(\mathcal{C})^{\mathrm{loc}}_{A}$ and $Z(\mathcal{Y})=\mathcal{Z}(\mathcal{D})^{\mathrm{loc}}_{B}$. We now argue that every object in the image of the map (\ref{tensor-functor}) lives in $\mathcal{E}^{[\bf 1,\bf 1]}$. Every functor in $\Fun_{\CX|\CY}(\mathcal{I},\mathcal{I})$ comes from tensoring on the left with an element of $\mathcal{Z}(\mathcal{C})_A^{\text{loc}}$, since $\mathcal{I}$ is invertible. Pushing forward to $\Fun_{\mathcal{C}|\mathcal{D}}(\mathcal{M},\mathcal{M})$ and using the universal property of the relative Deligne tensor product, we find that the functors in the image must all come from acting just on the factor of $\mathcal{C}_A$. Now, since every $(\mathcal{C},\CX)$-bimodule endofunctor of $\mathcal{C}_A$ comes from tensoring on the left with an element of $\mathcal{Z}(\mathcal{C})$, we conclude that every object in the image of the map (\ref{tensor-functor}) lives in $\mathcal{E}^{[\bf 1,\bullet ]}$. By a completely symmetric argument, we thus find that every functor is in  $\mathcal{E}^{[\bf 1,\bf 1]}=\mathcal{E}^{[\bullet, \bf 1]}\cap \mathcal{E}^{[\bf 1,\bullet ]}$. Playing this argument in reverse, we conclude there is an equivalence of unitary fusion categories

\begin{equation}\label{E-to-F}
    \mathcal{E}^{[\bf 1,\bf 1]}\cong \Fun_{\CX|\CY}(\mathcal{I},\mathcal{I})~.
\end{equation}

Now, using Frobenius reciprocity and the invertibility of $\mathcal{I}$, we get equivalences of fusion categories

\begin{align*}
 \Fun_{\CX|\CY}(\mathcal{I},\mathcal{I})&\cong \Fun_{\CX|\CX}(\CX, \mathcal{I}\boxtimes_{\CY}\mathcal{I^*})\\
 &\cong \Fun_{\CX|\CX}(\CX, \CX)\cong \mathcal{Z}(\CC)_A^{\rm loc}~.
\end{align*}

Tracking through the braiding and combining with (\ref{E-to-F}), we arrive at the equivalence of $\mathcal{E}^{[\bf 1, \bf 1]}$ and $\mathcal{Z}(\mathcal{C})_A^{\text{loc}}$. Using Frobenius reciprocity in the other direction, we obtain the equivalence of $\mathcal{E}^{[\bf 1, \bf 1]}$ and $\mathcal{Z}(\mathcal{D})_B^{\text{loc}}$.

The fact that every bimodule category, $\mathcal{M}$, has the form given in Proposition \ref{non-degeneracy-theorem} comes from applying \cite[Proposition 3.6]{davydov2013structure}. As such, the category $\mathcal{E}^{[\bf 1, \bf 1]}$ is always equivalent to the Drinfeld center of some unitary fusion category and is therefore non-degenerate.
\end{proof}

\section{Quantum dimensions}\label{sec:d_n}

In the entanglement bootstrap program, one can assign quantum dimensions to many types of sectors by looking at their relative von Neumann entropies compared with the vacuum sector. As a particular application of this idea, one of the present authors in collaboration with Kim introduced the notion of the quantum dimension of a parton sector \cite{ShiKim2021}. Since this definition of quantum dimension requires the existence of a vacuum sector, partons only naively have quantum dimensions when they live at the interface between two domain walls of the same type. In this case, the vacuum parton sector (denoted $\bf 1$) is defined to be the parton sector associated with the trivial domain wall defect (i.e., no defect at all). Nonetheless, in Appendix~\ref{app:dn-M-N} we put forward a broader definition, which is applicable without reference to a vacuum.

Passing through our algebraic correspondence, the first-principles considerations in \cite{ShiKim2021} predict a \lq\lq quantum dimension" invariant, $d_{\mathfrak{n}}$, associated with indecomposable bimodule subcategories $\mathfrak{n}\subseteq \mathcal{M}^* \boxtimes_{\mathcal{C}} \mathcal{M}$, and a quantum dimension, $d_{\mathfrak{u}}$,  associated with indecomposable bimodule subcategories $\mathfrak{u}\subseteq \mathcal{M}\boxtimes_{\mathcal{D}}\mathcal{M}^*$. In fact, the internal consistency conditions between the quantum dimensions of parton sectors and quantum dimensions of domain wall excitations derived in \cite{ShiKim2021} are enough to uniquely specify these invariants. Namely, \cite[eq. (56)]{ShiKim2021} reads
 \begin{equation}
 \label{dn-times-du}
 d^2_{\mathfrak{n}}d^2_{\mathfrak{u}}=\frac{\sum_{F\in \mathcal{L}_O^{[\mathfrak{n},\mathfrak{u}]}}d_{F}^2}{\sum_{F\in \mathcal{L}_O^{[{\bf1},{\bf1}]}} d_F^2}~,
 \end{equation}
where $\mathcal{L}_{N}$, $\mathcal{L}_{U}$, and $\mathcal{L}_{O}^{[\mathfrak{n},\mathfrak{u}]}$ are as before. Here, the quantum dimension $d_F$ can be defined using the unitary fusion category structure on the space $\Fun_{\mathcal{C}|\mathcal{D}}(\mathcal{M},\mathcal{M})$ of domain wall excitations.
 
 Applying this formula first with $\mathfrak{u}={\bf 1}$ and then with $\mathfrak{n}={\bf 1}$ we arrive at expressions for $d_{\mathfrak{n}}$ and $d_{\mathfrak{u}}$ entirely in terms of the quantum dimensions of domain wall excitations
 \begin{equation}
 \label{dn-and-du-def}
 d^2_{\mathfrak{n}}=\frac{\sum_{F\in \mathcal{L}_O^{[\mathfrak{n},{\bf 1}]}}d_{F}^2}{\sum_{F\in \mathcal{L}_O^{[{\bf1},{\bf1}]}} d_F^2},\quad d^2_{\mathfrak{u}}=\frac{\sum_{F\in \mathcal{L}_O^{[{\bf 1},\mathfrak{u}]}}d_{F}^2}{\sum_{F\in \mathcal{L}_O^{[{\bf1},{\bf1}]}} d_F^2}~.
 \end{equation}
Clearly, (\ref{dn-and-du-def}) is therefore a valid mathematical definition of $d_{\mathfrak{n}}$ and $d_{\mathfrak{u}}$.

In this way, \eqref{dn-times-du} becomes a non-trivial consistency check of our category theoretic framework. It provides a concrete conjecture about fusion categories predicted from first principles. We state our discussion formally as a theorem and give a proof:

 \begin{theorem}\label{thm:d_n-M-M} 
 Equation \eqref{dn-times-du} is valid for all pairs of unitary fusion categories $\mathcal{C}$, $\mathcal{D}$ and for all indecomposable $(\mathcal{C},\mathcal{D})$-bimodule categories $\mathcal{M}$.
 \end{theorem}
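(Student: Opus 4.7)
I would first rewrite (\ref{dn-times-du}) in terms of global dimensions $\dim(\calA):=\sum_X d_X^2$. Unpacking (\ref{dn-and-du-def}), the target identity is equivalent to the multiplicativity
\begin{equation*}
\dim(\CE^{[\mathfrak{n},\mathfrak{u}]})\cdot \dim(\CE^{[{\bf 1},{\bf 1}]}) \;=\; \dim(\CE^{[\mathfrak{n},{\bf 1}]})\cdot \dim(\CE^{[{\bf 1},\mathfrak{u}]}).
\end{equation*}
The plan is to establish this by constructing an equivalence of finite semisimple abelian categories
\begin{equation*}
\CE^{[\mathfrak{n},{\bf 1}]}\boxtimes_{\CE^{[{\bf 1},{\bf 1}]}}\CE^{[{\bf 1},\mathfrak{u}]}\xrightarrow{\ \sim\ } \CE^{[\mathfrak{n},\mathfrak{u}]}
\end{equation*}
induced by the fusion product $F\boxtimes G\mapsto F\otimes G$, and then invoking the standard dimension identity $\dim(\calA\boxtimes_{\calE}\calB)=\dim(\calA)\dim(\calB)/\dim(\calE)$ for the relative Deligne product of module categories over a centrally embedded fusion subcategory.

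To make the fusion functor well-defined, I would first check that tensor product respects the parton bigrading. Given $F\in\CE^{[\mathfrak{n},{\bf 1}]}$ and $G\in\CE^{[{\bf 1},\mathfrak{u}]}$, the identification of $\CE^{[\bullet,{\bf 1}]}$ with $\mathrm{Im}(\calZ(\CC)\to\CE)$ (and symmetrically for $\CE^{[{\bf 1},\bullet]}$) lets us embed $F\hookrightarrow F_C$ and $G\hookrightarrow F_D$ for some $C\in\calZ(\CC)$ and $D\in\calZ(\CD)$, as in (\ref{braiding-def}). Passing through the Frobenius reciprocity equivalence used to define $\mathfrak{n}_F$, the $(\CD,\CD)$-bimodule image controlling the $N$-type parton of $F\otimes G$ is unaffected by tensoring with $G$, because $D$ acts centrally through the right $\CD$-structure on $\CM^*\boxtimes_{\CC}\CM$. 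Hence $\mathfrak{n}_{F\otimes G}=\mathfrak{n}_F$ and symmetrically $\mathfrak{u}_{F\otimes G}=\mathfrak{u}_G$. The balanced product over $\CE^{[{\bf 1},{\bf 1}]}=\CE^{[\bullet,{\bf 1}]}\cap\CE^{[{\bf 1},\bullet]}$ is legitimate because this subcategory embeds centrally into both factors, by the braiding argument of Proposition \ref{braiding-well-defined}.

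The hard step is essential surjectivity: every simple $H\in\CE^{[\mathfrak{n},\mathfrak{u}]}$ must arise as a summand of some $F\otimes G$. Here I would invoke the sandwich decomposition of Proposition \ref{non-degeneracy-theorem}, $\CM\cong\CC_A\boxtimes_{\CC_A^{\text{loc}}}\calI\boxtimes_{\CD_B^{\text{loc}}}\CD_B$, and argue that any bimodule endofunctor of $\CM$ factors into three commuting layers acting on $\CC_A$, $\calI$, and $\CD_B$. The invertible middle piece contributes only to $\CE^{[{\bf 1},{\bf 1}]}$, while the outer $\CC_A$-layer carries the $U$-type data and the outer $\CD_B$-layer carries the $N$-type data. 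Reassembling this factorization exhibits every simple of $\CE^{[\mathfrak{n},\mathfrak{u}]}$ as a summand of a product of a $\CD_B$-side contribution in $\CE^{[\mathfrak{n},{\bf 1}]}$ and a $\CC_A$-side contribution in $\CE^{[{\bf 1},\mathfrak{u}]}$, yielding the desired equivalence.

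The main obstacle will be the bookkeeping for the sandwich: one must show that the image of $\calZ(\CC)$ (resp.\ $\calZ(\CD)$) in $\CE$ genuinely corresponds to acting on the $\CC_A$ (resp.\ $\CD_B$) layer, and that the $\mathcal{L}_N$-labels match the indecomposable $(\CD,\CD)$-bimodule components of $\CD_B\boxtimes_{\CD_B^{\text{loc}}}\CD_B^*$ under the induced identification. Should this direct route prove delicate, an alternative bypassing the sandwich is to compute $\dim(\CE^{[\mathfrak{n},\bullet]})$ directly from the Frobenius reciprocity equivalence $\CE\cong\Fun_{\CD|\CD}(\CD,\CM^*\boxtimes_{\CC}\CM)$: the piece $\CE^{[\mathfrak{n},\bullet]}$ corresponds to $\Fun_{\CD|\CD}(\CD,\mathfrak{n})$, whose dimension is determined by the indecomposable bimodule $\mathfrak{n}$, and the symmetric computation on the $\CC$-side then allows one to verify the four-way dimension identity by direct substitution.
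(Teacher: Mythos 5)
Your reduction of \eqref{dn-times-du} to the four-way identity $\dim(\mathcal{E}^{[\mathfrak{n},\mathfrak{u}]})\dim(\mathcal{E}^{[{\bf 1},{\bf 1}]})=\dim(\mathcal{E}^{[\mathfrak{n},{\bf 1}]})\dim(\mathcal{E}^{[{\bf 1},\mathfrak{u}]})$ is correct, and the observation that fusion respects the parton bigrading is indeed needed (the paper uses it too). However, the load-bearing step of your plan has a normalization gap. The graded components are only module categories over $\mathcal{E}^{[{\bf 1},{\bf 1}]}$, and dimensions of objects in a module category are defined only up to rescaling; the formula $\dim(\mathcal{A}\boxtimes_{\mathcal{E}}\mathcal{B})=\dim(\mathcal{A})\dim(\mathcal{B})/\dim(\mathcal{E})$ holds with respect to an intrinsic normalization, whereas \eqref{dn-times-du} is a statement about the quantum dimensions $d_F$ inherited from the ambient fusion category $\mathcal{E}$. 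An abstract equivalence of finite semisimple categories does not transport those inherited dimensions (e.g., Ising viewed as a module category over $\mathrm{Vec}$ is just $\mathrm{Vec}^{\oplus 3}$, blind to the fact that $\sum_F d_F^2=4$), so even granting your equivalence $\mathcal{E}^{[\mathfrak{n},{\bf 1}]}\boxtimes_{\mathcal{E}^{[{\bf 1},{\bf 1}]}}\mathcal{E}^{[{\bf 1},\mathfrak{u}]}\simeq\mathcal{E}^{[\mathfrak{n},\mathfrak{u}]}$ you still owe an argument identifying $\sum_{F\in\mathcal{L}_O^{[\mathfrak{n},\mathfrak{u}]}}d_F^2$ with the normalization in which multiplicativity holds---and that identification is essentially the theorem itself. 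The natural way to supply it is a regular-element computation in the Grothendieck ring, which is exactly the paper's route: since $\mathcal{Z}(\CC\boxtimes\overline{\CD})\to\mathcal{E}$ is surjective \cite[Prop. 2.34]{drinfeld2010braided} and objects of $\mathcal{E}^{[\bullet,{\bf 1}]}$ and $\mathcal{E}^{[{\bf 1},\bullet]}$ commute, \cite[Lemma 3.38]{drinfeld2010braided} gives $R^{[\bullet,{\bf 1}]}\otimes R^{[{\bf 1},\bullet]}=\dim(\mathcal{E}^{[{\bf 1},{\bf 1}]})R$ in $K_0(\mathcal{E})\otimes\mathbb{R}$, and projecting onto the $[\mathfrak{n},\mathfrak{u}]$ component and taking dimensions finishes the proof. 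Your route, if completed honestly, would end up re-deriving this identity rather than bypassing it.

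Two further steps would fail as written. First, the essential-surjectivity argument via ``three commuting layers'' of the sandwich of Proposition \ref{non-degeneracy-theorem} is not correct: bimodule endofunctors of $\CC_A\boxtimes_{\CC_A^{\text{loc}}}\mathcal{I}\boxtimes_{\CD_B^{\text{loc}}}\CD_B$ do not factor into endofunctors of the layers (wall-excitation categories contain simples, e.g. confined sectors, not of product form), so the factorization you invoke does not exist. If you want surjectivity of the fusion map onto $\mathcal{E}^{[\mathfrak{n},\mathfrak{u}]}$, the efficient argument is again the surjectivity of the center functor: every simple of $\mathcal{E}$ is a summand of the image of some $Z\boxtimes W$, which is $F_Z\otimes F_W$ with $F_Z\in\mathcal{E}^{[\bullet,{\bf 1}]}$ and $F_W\in\mathcal{E}^{[{\bf 1},\bullet]}$, and grading-compatibility localizes the relevant summands. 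Second, your fallback---computing the marginals $\dim\mathcal{E}^{[\mathfrak{n},\bullet]}$ via $\Fun_{\CD|\CD}(\CD,\mathfrak{n})$ and the $\CC$-side analogue---cannot close the argument: \eqref{dn-times-du} asserts that the matrix of component dimensions $\dim\mathcal{E}^{[\mathfrak{n},\mathfrak{u}]}$ factorizes as a rank-one product of an $\mathfrak{n}$-dependent and a $\mathfrak{u}$-dependent factor, and row and column sums alone do not determine the individual entries, so no ``direct substitution'' from the marginals can establish \eqref{dn-times-du}.
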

 \begin{proof} Let $\mathcal{E}=\Fun_{\mathcal{C}|\mathcal{D}}(\mathcal{M},\mathcal{M})$ be as before. Define $R^{[\mathfrak{n},\mathfrak{u}]}=\sum_{F\in \mathcal{L}_O^{[\mathfrak{n},\mathfrak{u}]}}d_F[F]$  living in the Grothendieck ring $K_0(\mathcal{E})\otimes \mathbb{R}$ of $\mathcal{E}$, and let $R^{[\bullet, {\bf 1}]}$, $R^{[{\bf 1},\bullet]}$, and $R$ be defined in terms appropriate linear combinations of simple objects in $R^{[\mathfrak{n},\mathfrak{u}]}$.

By \cite[Prop. 2.34]{drinfeld2010braided} the functor $\mathcal{Z}(\mathcal{C}\boxtimes \overline{\mathcal{D}})\to \mathcal{E}$ is surjective, and thus $\mathcal{E}$ has no proper tensor subcategories containing both $\mathcal{E}^{[\bullet,{\bf1}]}$ and $\mathcal{E}^{[{\bf1},\bullet]}$. Additionally, for any $X\in \mathcal{E}^{[\bullet,{\bf1}]}$ and $Y\in \mathcal{E}^{[{\bf1},\bullet]}$ we have $X\otimes Y\cong Y\otimes X$ since $X$, $Y$ act by module actions on opposite sides. As such, we are in the situation to apply \cite[Lemma 3.38]{drinfeld2010braided}, which reads
\begin{equation}
\label{regular-element-lemma}
R^{[\bullet ,{\bf 1}]}\otimes R^{[{\bf 1},\bullet ]}=\dim(\mathcal{E}^{[{\bf 1},{\bf 1}]})R{~,}
\end{equation}
where \lq\lq$\dim$" is the global quantum dimension. Since the tensor product restricts to a map $\otimes: \mathcal{E}^{[\mathfrak{n},{\bf 1}]}\boxtimes \mathcal{E}^{[{\bf 1},\mathfrak{u}]}\to \mathcal{E}^{[\mathfrak{n},\mathfrak{u}]}$, we can compare the terms in the span of $\mathcal{L}_O^{[\mathfrak{n},\mathfrak{u}]}$ on both sides of eq. (\ref{regular-element-lemma}) to get
\begin{equation}
\label{regular-element-lemma-2}
R^{[\mathfrak{n} ,{\bf 1}]}\otimes R^{[{\bf 1},\mathfrak{u} ]}=\dim(\mathcal{E}^{[{\bf 1},{\bf 1}]})R^{[\mathfrak{n},\mathfrak{u}]}.
\end{equation}
Taking quantum dimensions and performing algebraic manipulations we obtain (\ref{dn-times-du}).

\end{proof}

\section{A Spacetime Picture and Generalized Symmetries}\label{GenSym}
In this section, we aim to give a spacetime-covariant perspective on the previous sections. Instead of thinking in terms of bulk Levin-Wen models, we place our $\calC$ and $\CD$ fusion categories on a two-spacetime-dimensional boundary, $\Sigma$, of a three-spacetime-dimensional bulk three-manifold, $X$ (i.e., $\partial X=\Sigma$).

The bulk topological orders on $X$ are described by Drinfeld centers, $\CZ(\calC)$ and $\CZ(\CD)$, separated by a two-spacetime-dimensional domain wall. On this domain wall, we place a simple one-spacetime-dimensional topological excitation, $F$, that divides the domain wall into surfaces $\CM$ and $\CN$. $\CM$ and $\CN$ intersect $\Sigma$ as one-spacetime-dimensional interfaces, $\CM_{\Sigma}$ and $\CN_{\Sigma}$, separating $\calC$ and $\CD$, while $F$ maps to a junction, $F_{\Sigma}$, separating $\CM_{\Sigma}$ and $\CN_{\Sigma}$ (see Fig. \ref{ST1}).

\begin{figure}
\includegraphics[width=7.3cm]{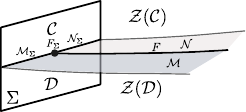}
\caption{A spacetime covariant version of the initial configuration that gives rise to partons. Here $\CN$ and $\CM$ are topological domain walls between TQFTs, $\CZ(\calC)$ and $\CZ(\CD)$, on a 3-manifold, $X$, and $F$ is a topological interface between these surfaces. The $\calC$ and $\CD$ generalized symmetries live on a boundary, $\Sigma$, where they act on $\CM_{\Sigma}$, $\CN_{\Sigma}$, and $F_{\Sigma}$. This action associates $(\calC,\CD)$ symmetry multiplets with $\CM_{\Sigma}$ and $\CN_{\Sigma}$ corresponding to $\CM$ and $\CN$ as well as an irrep of the generalized tube algebra, ${\rm Tube}(\CM|\CN)$, with $F_{\Sigma}$ and $F$ \cite{Choi:2024tri,Bhardwaj:2024igy}.}
\label{ST1}
\end{figure}

From the perspective of generalized symmetries, $\calC$ and $\CD$ are thought of as (non-)invertible fusion category symmetries on $\Sigma$. Then, using the physical picture in \cite{Choi:2024tri,Bhardwaj:2024igy}, we can think of $\CM_{\Sigma}$ and $\CN_{\Sigma}$ as belonging to irreducible $(\calC,\CD)$ multiplets of boundary conditions for the surfaces $\CM$ and $\CN$, where the multiplets in question are generated by the action of the $\calC$ and $\CD$ symmetries. More mathematically, these symmetry multiplets are $(\calC,\CD)$-bimodule categories. As we have discussed previously, the bulk $\CM$ and $\CN$ surfaces can also be thought of as $(\calC,\CD)$-bimodule categories. These bimodule categories correspond to the $(\CC,\CD)$ symmetry multiplets that $\CM_{\Sigma}$ and $\CN_{\Sigma}$ transform under. In this language, $F_{\Sigma}$ and $F$ are associated with an irreducible representation of the generalized tube algebra, ${\rm Tube}(\CM|\CN)$ \cite{Choi:2024tri}, under the action of $\calC$ and $\CD$. 

Now we consider the pinching trick in this language. To that end, the $N$-type parton sectors are in one-to-one correspondence with the irreducible $(\CD,\CD)$ symmetry multiplets that arise when we fold the domain wall in such a way as to shrink the $\CC$ region. In other words, an $N$-type parton, $\kappa$, corresponds to a simple surface that arises in the bulk fusion of $\CM^*$ and $\CN$.

\begin{figure}
\includegraphics[width=8cm]{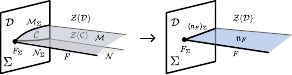}
\caption{We pinch the domain wall separating $\CZ(\calC)$ and $\CZ(\CD)$. In so doing, $F$, becomes a topological boundary condition for a surface, $\mathfrak{n}_F$, in the product of $\CM^*$ with $\CN$. This is the parton associated with $F$.}
\label{PinchST}
\end{figure}

Notice that once we fuse $\CM^*$ and $\CN$, $F$ becomes a topological boundary condition for one of the simple $\kappa$ surfaces. In this picture, the corresponding parton, $\mathfrak{n}_F$, is precisely the surface we associate with $F$ (see Fig. \ref{PinchST}). We can think of the parton quantum number as associating with $F_{\Sigma}$ and $F$ an irreducible representation of ${\rm Tube}(\mathfrak{n}_{F})$ given an irreducible representation of ${\rm Tube}(\CM|\CN)$. We may repeat the above discussion for the $U$-type parton sectors by instead folding the domain wall so that the $\CD$ region shrinks.

Given that we associate partons with surfaces, it may be surprising that we can assign quantum dimensions to them as in \eqref{dn-and-du-def}. However, as the examples in the appendix make clear, these quantum dimensions do not generally correspond to the naive quantum dimensions we read off from the 2-categorical fusion rules for surfaces in a particular topological phase. In fact, such quantities are typically ill-defined in quantum field theory (QFT) \footnote{To see one source of ambiguity, note that we can add an Euler counterterm living on the surface. This counterterm rescales the surface by $s^{\chi(\Sigma)}\in\mathbb{R}$ and therefore affects the fusion rules and corresponding quantum dimensions. A more general ambiguity corresponds to the fact that, in $D+1$ dimensional QFT, one expects $d$-dimensional defects to have fusion coefficients valued in $d$-dimensional TQFTs. See \cite{Roumpedakis2023} for a recent discussion.}. The main point is that our definition of the parton quantum dimension is in terms of one-spacetime-dimensional objects (the domain wall excitations) which do have well-defined quantum dimension. Moreover, our quantum dimensions are not properties of surfaces in a particular topological phase but rather are properties of surfaces arising from pairs of topological phases (separated by a gapped domain wall).

Finally, let us note that the above discussion is closely related to the construction of the SymTFT associated with two, in general different, two-spacetime-dimensional QFTs, $Q_1$ and $Q_2$, with fusion category symmetries $\calC$ and $\CD$ separated by a one-spacetime-dimensional interface (see \cite{Choi:2024tri,Bhardwaj:2024igy} and references therein). The main differences with respect to our above construction are that in the SymTFT we have a second boundary so that $X=I\times\Sigma$ (where $I\cong[0,1]$ is an interval) with both boundaries isomorphic to $\Sigma$ (we denote them as $\Sigma_{L,R}$), and $Q_{1,2}$ are not necessarily topological.

\begin{figure}
\includegraphics[width=7.8cm]{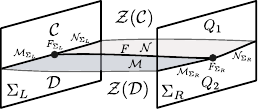}
\caption{The SymTFT embedding of Fig. \ref{ST1}. We include an additional right boundary, $\Sigma_R$, supporting (in general) non-topological QFTs, $Q_1$ and $Q_2$. Here the left boundary, $\Sigma_L$, supports the $\calC$ and $\CD$ generalized symmetries. The bulk $\CM$ and $\CN$ surfaces and $F$ interface end topologically on $\Sigma_L$ and (in general) non-topologically on $\Sigma_R$.}
\label{SymTFT}
\end{figure}

The SymTFT idea is that we can take $Q_1$ and $Q_2$ separated by a (not necessarily topological) interface and \lq\lq blow the theory up" into a theory on the three-manifold $X$, where the symmetries of $Q_{1,2}$ live on the left boundary, $\Sigma_L$, and the dynamical content of these QFTs live on $\Sigma_R$. The bulk is topological, and the generalized charges of the theory correspond to extended topological operators stretching between the boundaries (see Fig. \ref{SymTFT}). Then, applying our discussion above, operators that live on the interface between $Q_1$ and $Q_2$ carry parton quantum numbers. Moreover, we see that we can associate $N$-type parton quantum numbers (or, equivalently, irreps of ${\rm Tube}(\mathfrak{n}_F)$ that descend from irreps of ${\rm Tube}(\CM|\CN)$) with twisted-sector states arising from the pinching of the interfaces between $Q_1$ and $Q_2$ in such a way that the region supporting $Q_1$ shrinks (see Fig. \ref{Twist}). Similar comments apply to pinching in the opposite direction and $U$-type partons.

\begin{figure}
\includegraphics[width=8cm]{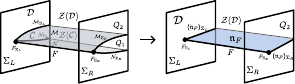}
\caption{Performing the pinching trick in the SymTFT associates a parton, $\mathfrak{n}_F$, with a (generally non-topological) twisted sector operator in QFT $Q_2$ gotten from dimensionally reducing the above configuration to two dimensions.}
\label{Twist}
\end{figure}

We summarize our discussion in Table \ref{parton-Gsymm}.

\begin{table}[h]
\begin{tabular}{ |p{3.8cm}||p{4.2cm}|}

\hline
Entanglement bootstrap& Spacetime picture\\
 \hline
 $N$-type parton sectors   & Simple topological surfaces in fusion of $\mathcal{M}^*$ and $\mathcal{N}$ \\
 \hline 
 $U$-type parton sectors & Simple topological surfaces in fusion of $\mathcal{N}$ and $\mathcal{M}^*$ \\
 \hline
 $N$-type parton sector associated with a defect, $F$&  Simple topological surface in fusion of $\mathcal{M}^*$ and $\mathcal{N}$ associated with $F$ topological boundary condition\\
 \hline
 $U$-type parton sector associated with a defect, $F$&  Simple topological surface in fusion of $\mathcal{N}$ and $\mathcal{M}^*$ associated with $F$ topological boundary condition\\
 \hline
\end{tabular}
\caption{A spacetime-covariant description of entanglement bootstrap concepts. In this section, we describe how to map these concepts to generalized symmetry multiplets and how to associate $F$ with an element of a (generalized) tube algebra.}
\label{parton-Gsymm}
\end{table}

\section{Discussion}

In this note, we have argued for an algebraic interpretation of entanglement bootstrap parton sectors \cite{ShiKim2021} within Kitaev and Kong's categorical framework of string-nets and bimodule categories. We then described our dictionary from a spacetime covariant perspective and also made contact with recent work on SymTFTs and generalized symmetries. In a set of appendices, we describe certain extensions of the discussion in the main text, consider various illustrative examples, and also give a more detailed exposition of the entanglement bootstrap while obtaining some new results.

There are several natural extensions of the techniques discussed in this note. One that we touch upon in a particular set of examples in Appendix~\ref{WittNT}, is to the case of phases that do not admit gapped boundaries, and are thus not in the same phase as Levin-Wen string nets~\cite{levin2005string,kim2024classifying} (theories that we refer to as corresponding to {\em Witt non-trivial} phases~\cite{Levin2013,Ng2020}).  While the Kitaev-Kong framework cannot be directly used in this situation, the techniques from this note are still applicable either via judicious use of the folding trick (e.g., see \cite{fuchs2013bicategories}) or via combining our methods with the formalism in \cite{huston2023composing}.

Another natural extension is to consider other entanglement bootstrap sectors discovered in \cite{ShiKim2021}, such as the {\em snake sector} or $\mathbb{N}$ and $\mathbb{U}$-type sectors. In all examples checked by the authors, interpreting these sectors in terms of categorical data was possible using extensions of the pinching trick. For instance, to interpret the snake sector one performs two pinches, leaving the snake sector to become a disk through a composition of three domain walls (see Fig. \ref{snake}). As such, snake sectors correspond to the indecomposable bimodule subcategories of a triple relative tensor product. By pinching, the $\mathbb{N}$ and $\mathbb{U}$-type can be understood as excitations on composite domain walls. Perhaps there is a more elegant algebraic theory which can be used to simultaneously describe the space of sectors corresponding to all possible geometries,  
as well as all possible classes of interactions between these different types of sectors.

In this note, we created dictionaries among different approaches to studying domain walls and their excitations: category theory, QFT / generalized symmetries, and the entanglement bootstrap (see Tables~\ref{parton-classification} and \ref{parton-Gsymm} and others in the Appendices). Such dictionaries have yet to be written in three and higher (spatial) dimensions, where the categorical description of TQFT is also quite rich~\cite{Kong-Wen2014,Johnson-Freyd2022,Kong2024,kong2022invitation}. In higher dimensions, there are even bulk superselection sectors predicted by the entanglement bootstrap that are, as far as we know, yet to appear in the category theory literature (e.g., \emph{graph excitations}~\cite{immersion2023} in 3d). It is interesting to ask if the ideas presented in this work can be generalized to produce a dictionary encompassing such sectors.

Finally, as we have seen in our discussion in Sec. \ref{GenSym}, partons naturally arise in SymTFTs and can therefore be related to non-topological quantum field theories. It would be interesting to explore any constraints that partons and more general entanglement bootstrap sectors impose on renormalization group flows. This avenue of research may be particularly attractive given the fact that (appropriately generalized) $c$-theorems in various dimensions can be constructed using entanglement (e.g., see \cite{Casini:2022rlv} for a recent review).

\begin{figure}
\includegraphics[width=7.2cm]{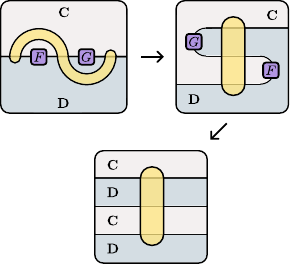}
\caption{Snake sectors of Ref.~\cite{ShiKim2021} can be understood in terms of domain wall fusion by applying the pinching trick twice.}
\label{snake}
\end{figure}

\section*{Acknowledgments}
MB thanks Mahesh Balasubramanian, Clement Delcamp, and Rajath Radhakrishnan for various discussions. BS thanks Ansi Bai and Yizhou Ma for independently proposing related questions after a conversation about entanglement bootstrap. MB's work was partly supported by the STFC under the grant, “Amplitudes, Strings and Duality.” RG acknowledges support by the Mani L. Bhaumik Institute for Theoretical Physics. BS was supported by the Simons Collaboration on Ultra-Quantum Matter, a grant from the Simons Foundation (652264, JM), the faculty startup grant of J. Y. Lee, the IQUIST fellowship at UIUC, and NSF award number PHY-2337931 at UC Davis. This research was supported in part by Perimeter Institute for Theoretical Physics and the International Centre for Mathematical Sciences. No new data were generated or analyzed in this study.

\bigskip\bigskip\bigskip

\appendix

\section{Partons in Witt non-trivial orders}\label{WittNT}
In this appendix, we touch upon the algebraic theory of partons for domain walls between Witt non-trivial topological orders (i.e., 2D topological orders not realizable by Levin-Wen models) whose bulk excitations are described by general uMTCs \footnote{More mathematically, we use the notion of Witt equivalence introduced in \cite{davydov2013witt,davydov2013structure}. In particular, we say a topological order, $\CP$, is Witt non-trivial if its corresponding Witt class is non-trivial (i.e., $\CP$ is not a Drinfeld center).}. The algebraic theory of partons in Witt-trivial topological orders (i.e., those realizable by Levin-Wen models) in the main text essentially describes partons as simple domain walls in the decomposition of a gapped domain wall and its dual. Such a notion is agnostic as to whether the theory is realizable by a Levin-Wen model and can therefore be defined for any topological field theory.

Notably, Witt non-trivial topological orders (including all chiral topological orders) are not expected to exactly satisfy the axioms of the entanglement bootstrap in a system of finite size ~\cite{Li2024}. However, we anticipate that a certain robust version of the entanglement bootstrap axioms can be used to describe Witt non-trivial states, from which partons may be derived rigorously.

An elegant and powerful framework for the analysis of domain walls and their fusion that covers the Witt non-trivial case was developed in \cite{huston2023composing}. This algebraic theory is slightly more involved, so we only sketch the relevant definitions. Ultimately, this construction relies on the realization of a Witt non-trivial topological order on the boundary of a $(3+1)$d Walker-Wang model. We assume that the bulk excitations of a $(2+1)$d topological order are described by a uMTC, $\CP$. In the approach of \cite{huston2023composing}, one focuses on a triple consisting of a unitary fusion category $\mathcal{X}$, a uMTC $\mathcal{A}$, and a braided tensor functor, $F:\mathcal{A}\to \CZ(\mathcal{X})$. These three pieces of data are subject to the condition $\CZ^{\mathcal{A}}(\mathcal{X})\cong \CP$, where $\CZ^{\mathcal{A}}$ is the Muger centralizer of the image of $\mathcal{A}$ under $F$ in $\CZ(\mathcal{X})$. Such a triple is called an $\mathcal{A}$-enriched fusion category $\mathcal{X}$ ($F$ is often omitted), and $\mathcal{A}$ plays the role of the Witt-trivializing theory (i.e., $\CZ(\CX)\cong \CA\boxtimes\CZ^{\CA}(\CX)$). For example, we can always take $\mathcal{A}=\CP^{rev}$ and $\mathcal{X}=\CP$ so that $\CZ(\CP)\cong \CP\boxtimes \CP^{rev}$. Crucially, whenever two topological orders admit a topological interface, they belong to the same Witt class, and we can present them using the same $\mathcal{A}$. 
  
 If $\CX$ and $\mathcal{Y}$ are $\mathcal{A}$-enriched fusion categories, then all topological domain walls between topological orders with bulk excitations $\CP\cong \CZ^{\CA}(\mathcal{X})$ and $\CQ\cong \CZ^{\CA}(\mathcal{Y})$ correspond to $\CA$-centered $\CX|\CY$-bimodule categories (see \cite{Brochier_2021} for the definition). The domain wall fusion is the relative Deligne product.

In this context, domain wall defects correspond to $\CA$-centered bimodule functors, $\Fun^{\CA}_{\CX|\CY}(\CM,\CN)$, defined as morphisms between $\CM$ and $\CN$ as objects in the Morita 4-category of braided fusion categories of \cite{Brochier_2021}. Such functors satisfy the Eilenberg-Watts theorem and an appropriate version of the  Frobenius reciprocity (\cite[Lemma 5.4]{Brochier_2021}) such that every $\CA$-centered bimodule functor, $F:\CN\to \CM$, corresponds to a bimodule functor, $\CX\to \CN^{*}\boxtimes_{\CX}\CM$.  Analogously to the Witt-trivial case, we identify partons with defects. We obtain a dictionary between partons and enriched fusion categories.
 \begin{table}[h]
\begin{tabular}{ |p{3.7cm}||p{3.7cm}|  }
 \hline
 Witt non-trivial topological order& Category theory\\
 \hline
 Bulk excitations   & $\CA$-enriched unitary fusion category $\CX$\\
 Gapped domain wall & $\CA$-enriched $\CX|\CY$- bimodule category  $\CM$\\
 Composition of domain walls  & Relative Deligne product \\
 $N$-type parton sectors & Indecomposable $\CA$-enriched $\CY|\CY$ bimodule subcategories of $\CM^{*}\boxtimes_{\CX}\CN$\\
 $U$-type parton sectors & Indecomposable $\CA$-enriched $\CX|\CX$ bimodule subcategories of $\CN\boxtimes_{\CY}\CM^*$\\
  $N$-type parton sector associated with a defect, F& $\, \mathfrak{n}_F\subseteq \mathcal{M}^* \boxtimes_{\mathcal{X}} \mathcal{N}$  \\
$U$-type parton sector associated with a defect, F  &  
 $\,\mathfrak{u}_F\subseteq \mathcal{N}\boxtimes_{\mathcal{Y}}\mathcal{M}^*$ \\
 \hline
\end{tabular}
\caption{Algebraic description of partons for gapped domain walls between Witt non-trivial orders}
\label{Anomalous}
\end{table}

There are several parallels between our problem and the one discussed in \cite{huston2023composing}: techniques developed there provide a computational tool for decomposition of a domain wall like $\CM\boxtimes_{\CY}\CM^*$. In that language, partons correspond to minimal projections in the convolution algebra of endomorphisms of the condensable algebra associated with $\CM$ (see the discussion in Section 5 of \cite{huston2023composing}). Intuitively, one can guess this answer by looking at Fig. \ref{parton-diagrams}~(b,c): we can interpet this diagram as a process of creation and annihilation of an excitation inside the yellow region. These excitations should be able to come out of the interface and come back in, and we can recognize them as endomorphisms of the condensable algebra.


\section{Some Illustrative Examples}\label{examples}

In this section, we study examples that illustrate the discussion in the main text. These examples are reinterpreted in an upcoming paper \cite{BalasubramanianXXX} from the perspective of gauging generalized symmetries.

\subsection{Partons and 1-form symmetry gauging}\label{app:exmp-non-chiral}
Let us first consider a class of examples in which the bulk TQFTs are $\CZ(\calC)$ and $\CZ(\CD)\cong \CZ(\CC)_A^{\rm loc}$. The latter Drinfeld center is the category of local modules over a condensable algebra, $A\in\CZ(\CC)$, and it arises via gauging a corresponding 1-form symmetry in $\CZ(\CC)$. As an object in $\CZ(\CC)$, we have
\begin{equation}\label{GenCond}
A=1+\sum_{\ell\ne1}\ell~,
\end{equation}
where each $\ell\in \CZ(\CC)$ has bosonic self-statistics, and $A$ also has a multiplication and unit map that allow us to consistently sum over it in the (2+1)$d$ spacetime. Performing this sum in half the spacetime produces $\CZ(\CD)$ and an interface with $\CZ(\CC)$. The theory of excitations on the domain wall between the two phases is isomorphic to the category of $A$ modules, $\CZ(\CC)_A$.

This class of examples is sufficiently general to illustrate two important properties of partons and their quantum dimensions:
\begin{enumerate}
\item Having partonic quantum dimension 1 is necessary but not sufficient for the corresponding surface to be invertible (or, in the entanglement bootstrap terminology, to be \lq\lq transparent").
\item Parton quantum dimensions (and partons themselves) are defined by pairs of topological phases separated by a domain wall rather than being intrinsic to a given topological phase.
\end{enumerate}

To better understand these points, let us first comment on the case in which we gauge a set of lines that form a ${\rm Rep}(G)\subset\CZ(\CC)$ fusion subcategory, where $G$ is a finite group. Here we have
\begin{equation}
A=\sum_{\ell\in {\rm Irrep}(G)}d_{\ell}\,\ell~,
\end{equation}
where $d_{\ell}$ is the quantum dimension of $\ell$ (which is equal to the dimension of the corresponding irrep of $G$ that is associated with $\ell$). In this case, the domain wall theory is the $G$-crossed extension of $\CZ(\CD)$ \cite{Barkeshli:2014cna}
\begin{equation}\label{Gcrossed}
\CZ(\CC)_A\cong\bigoplus_{g\in G}\CC_g~,
\end{equation}
where the identity component is $\CC_1\cong \CZ(\CC)_A^{\rm loc}\cong \CZ(\CD)$. The remaining components, $\CC_g$ with $g\ne1$, are non-local and consist of non-genuine lines attached to simple invertible surfaces implementing the action of the corresponding $g$ on the $\CZ(\CD)$ bulk. Therefore, group elements are in one-to-one correspondence with the $N$-type partons. Moreover, by \eqref{dn-and-du-def}, we have
\begin{equation}
\label{dn-RepG}
d^2_{\mathfrak{n}}=d^2_g=\frac{\sum_{F\in \mathcal{L}_O^{[g,{\bf 1}]}}d_{F}^2}{\sum_{F\in \mathcal{L}_O^{[{\bf1},{\bf1}]}} d_F^2}=1~,
\end{equation}
where the final equality follows from the fact that the surfaces satisfy the fusion rules of a finite group and that therefore the fusion of any element in $\mathcal{L}_O^{[g,1]}$ with any element in $\mathcal{L}_O^{[h,1]}$ must produce only elements in $\mathcal{L}_O^{[gh,1]}$\cite{Barkeshli:2014cna}. In other words, for ${\rm Rep}(G)$ condensation, all $N$-type partons are Abelian. Note that the partons in $\mathcal{L}_{\CO}^{[1,1]}$ can be pulled into the $\CZ(\CD)$ bulk without an attached surface and therefore have a notion of braiding associated with them (this statement also follows from our more general discussion in Sec. \ref{grading}).

Next, let us consider the $U$-type partons. By the discussion in \cite{BalasubramanianXXX,Buican:2023bzl},
\begin{equation}\label{SingleU}
S_A\cong \CN\boxtimes_{\CD}\CM^*~,
\end{equation}
is a single simple surface obtained by \lq\lq higher-gauging" $A$ \cite{Roumpedakis2023} (i.e., gauging a corresponding 1-form symmetry on a co-dimension one slice of spacetime). Therefore, there is a single $U$-type parton, $\mathfrak{u}=1$. It is also Abelian since, by \eqref{dn-and-du-def}, we have
 \begin{equation}
 \label{du-RepG}
 d^2_{\mathfrak{u}}=\frac{\sum_{F\in \mathcal{L}_O^{[{\bf1},{\bf 1}]}}d_{F}^2}{\sum_{F\in \mathcal{L}_O^{[{\bf1},{\bf1}]}} d_F^2}=1~.
 \end{equation}
 In summary, we learn that, for ${\rm Rep(G)}$ condensation, there are $|G|$ $N$-type partons, a single $U$-type parton, and all $|G|+1$ partons are Abelian.

Note that, even though $d_{\mathfrak{u}}=1$, $\mathfrak{u}$ is associated with a non-invertible surface. Indeed, thinking about this surface as an object in the 2-category $2{\rm Rep}(G)$, we obtain the fusion rule \cite{greenough2010monoidal}
\begin{equation}\label{SAfusion}
S_A\times S_A=|G|S_A~,
\end{equation}
where $|G|\ge2$ is the order of $G$. Therefore, this example illustrates the first point made in the list at the beginning of this subsection. Moreover, treated as a computation in 2-category theory (ignoring various QFT ambiguities), \eqref{SAfusion} implies a quantum dimension $d(S_A)=|G|$. As promised in the main text, this quantum dimension differs from the partonic quantum dimension \eqref{du-RepG}.

In the more general condensation case described around \eqref{GenCond}, the partons are no longer necessarily Abelian, but \eqref{SingleU} still holds by the same logic. Applying \eqref{du-RepG} to this more general case, we see there is always a single Abelian $U$-type parton when the two phases are related by anyon condensation in the way we have described.

The existence of non-Abelian $N$-type partons in more general condensations can be seen quite easily. For example, consider $\CZ(\CC)\cong D(S_3)$ and $\CZ(\CD)\cong D(\mathbb{Z}_2)$. As discussed in \cite{BalasubramanianXXX} (see also \cite{cui2019generalized}), the domain wall theory in this case has six objects, $\left\{X,Y\right\}\oplus\left\{1,e,m,f\right\}$ with
\begin{equation}
d_X = d_Y=2~, \ \ \ d_1=d_e=d_m=d_f=1~.
\end{equation}
Following \cite{BalasubramanianXXX}, we can interpret $X$ and $Y$ as boundary conditions for the $S_{1+e}$ condensation surface, and $1,e,m,f$ as boundary conditions for the trivial condensation surface (i.e., these correspond to $D(\mathbb{Z}_2)$ lines). Therefore, we have two $N$-type partons (labeled as $1$ and $\kappa$) and a single $U$-type parton. The non-trivial $N$-type parton corresponds to $S_{1+e}$. Applying the formulas in \eqref{dn-and-du-def}, we see that
\begin{equation}
d^2_{\mathfrak{n}}=\left\{1,2\right\}~, \ \ d^2_{\mathfrak{u}}=1~.
\end{equation}
Note that, as in the discussion around \eqref{SingleU}, the non-Abelian parton quantum dimension, $d_{\kappa}=\sqrt{2}$, differs from the naive 2-categorical quantum dimension, $d(S_{1+e})=2$. Moreover, if we consider the case $\CZ(\CC)\cong D(\mathbb{Z}_2)$ and $\CZ(\CD)\cong{\rm Vec}$ (i.e., the trivial TQFT), the $S_{1+e}$ surface corresponds to the single $U$-type parton. In this case it has quantum dimension 1. This example emphasizes the second point in the list at the beginning of this subsection: parton quantum dimensions (and partons themselves) are defined by pairs of topological phases separated by a domain wall rather than being intrinsic to a given topological phase. This fact is reinterpreted in Appendix~\ref{app:fused-wall-EB} from the perspective of the entanglement bootstrap.


\subsubsection{Simple Chern-Simons Examples}\label{ChiralEx}
Our examples thus far all involve Drinfeld centers of spherical fusion categories, which are all non-chiral by construction (they admit gapped boundaries). However, our results apply more generally to chiral theories. Let us illustrate this point in a pedestrian manner for the simple case of the chiral Chern-Simons theory $\CC=SU(2)_{4k}$ (here $k\in\mathbb{Z}_+$). Thinking in terms of the examples discussed above, we can consider
\begin{equation}
\CZ(\CC)\cong SU(2)_{4k}\boxtimes\overline{SU(2)_{4k}}~.
\end{equation}
This bulk theory is non-chiral, and each factor is a Chern-Simons theory with anyons of $SU(2)$ spin $0$, $1/2$, $1$, $\cdots$, $2k$ \footnote{The topological spins for the two factors are complex conjugates of each other. In particular, we have
\begin{equation}
\theta_{\ell}=\exp\left(\pi i\ell(\ell+1)\over 2k+1\right)~, \ \ \ \theta_{\tilde\ell}=\exp\left(-{\pi i\tilde\ell(\tilde\ell+1)\over 2k+1}\right)~.
\end{equation}
Note that the $SU(2)$ spin $2k$ anyon is a boson.}. The lines in the theory can be labeled by their $SU(2)$ spin
\begin{equation}
(\ell,\tilde\ell)~,\ \ \ \ell,\tilde\ell\in\left\{0,1/2,1,\cdots,2k\right\}~.
\end{equation}
They have fusion rules
\begin{eqnarray}
(\ell_1,\tilde\ell_1)&\times&(\ell_2,\tilde\ell_2)=\cr&&\sum_{\ell=|\ell_1-\ell_2|, \tilde\ell=|\tilde\ell_1-\tilde\ell_2|}^{{\rm min}(\ell_1+\ell_2,4k-\ell_1-\ell_2),{\rm min}(\tilde\ell_1+\tilde\ell_2,4k-\tilde\ell_1-\tilde\ell_2)}(\ell,\tilde\ell)~.\ \ \ \ \ \
\end{eqnarray}
For these choices of levels, the lines $(2k,0)$, $(0,2k)$, and $(2k,2k)$ are bosons. Moreover, these lines form a $\mathbb{Z}_2\times\mathbb{Z}_2$ 1-form symmetry that acts in the following way via fusion
\begin{eqnarray}
(2k,0)\times(\ell,\tilde\ell)&=&(2k-\ell,\tilde\ell)~,\cr (0,2k)\times(\ell,\tilde\ell)&=&(\ell,2k-\tilde\ell)~,\cr (2k,2k)\times(\ell,\tilde\ell)&=&(2k-\ell,2k-\tilde\ell)~.
\end{eqnarray}

We may now consider gauging the $\mathbb{Z}_2\times\mathbb{Z}_2$ one-form symmetry by summing over the corresponding condensable algebra
\begin{equation}
A=(0,0)+(2k,0)+(0,2k)+(2k,2k)~,
\end{equation}
to produce the theory
\begin{equation}
\CZ(\CD)\cong SO(3)_{2k}\boxtimes \overline{SO(3)_{2k}}~.
\end{equation}
In this case, the domain wall theory is the $G=\mathbb{Z}_2\times\mathbb{Z}_2$-crossed theory
\begin{equation}
\CZ(\CC)_A=\CC_{1}\oplus\CC_{g_1}\oplus\CC_{g_2}\oplus\CC_{g_1g_2}~,
\end{equation}
where the simple objects in each factor are
\begin{eqnarray}
\CC_{1}&=&\left\{(\ell,\tilde\ell)|\ \ell,\tilde\ell\in\mathbb{Z}~,\ 0\le\ell,\tilde\ell<k\right\}\cr&\cup&\left\{(k,\tilde\ell,\pm)|\ 0\le\tilde\ell<k\right\}\cup\left\{(\ell,k,\pm)|\ 0\le\ell<k\right\}\cr&\cup&\left\{(k,k,\alpha,\beta)\right\} ~,\ \ \ \cr\CC_{g_1}&=&\left\{(\ell,\tilde\ell)|\ \ell={2n+1\over2}~, \  \tilde\ell\in\mathbb{Z}~,\ 0\le n,\tilde\ell<k~,\right\}~,\cr \CC_{g_2}&=&\left\{(\ell,\tilde\ell)|\ \tilde\ell={2\tilde n+1\over2}~, \  \ell\in\mathbb{Z}~,\ 0\le\tilde n,\ell<k~,\right\}~,\cr\CC_{g_1g_2}&=&\left\{(\ell,\tilde\ell)|\ \ell={2n+1\over2}~,\tilde\ell={2\tilde n+1\over2}~,\ 0\le n,\tilde n<k\right\},\ \ \ \ \ \ \ \nonumber
\end{eqnarray}
where $\alpha,\beta\in\left\{\pm\right\}$. Applying \eqref{dn-RepG} and \eqref{du-RepG} to the case at hand, we conclude that there are four Abelian $N$-type partons
\begin{equation}
d^2_{1}=d^2_{g_1}=d^2_{g_2}=d^2_{g_1g_2}=1~,
\end{equation}
and a single Abelian $U$-type parton.

To specify the theory on $\Sigma$ in the approach of Sec.~\ref{GenSym}, we need to give a Lagrangian algebra in $\CZ(\CC)$ and a Lagrangian algebra in $\CZ(\CD)$. A particularly simple choice is given by the canonical Lagrangian algebras
\begin{eqnarray}\label{LagAlg}
L_{\CZ(\CC)}&=&(0,0)+(1/2,1/2)+(1,1)+\cdots+(2k,2k)~,\cr L_{\CZ(\CD)}&=&(0,0)+(1,1)+\cdots+(k-1,k-1)+(k,k,+,+)\cr&\ &+(k,k,+,-)+(k,k,-,+)+(k,k,-,-)~.\ \ \ 
\end{eqnarray}
We can then \lq\lq unfold" the theory. The bulk theory is now an $SU(2)_{4k}$ Chern-Simons theory separated from an $SO(3)_{2k}$ Chern-Simons theory via a domain wall. There is now a surface operator, $S$, replacing the gapped boundary on $\Sigma$. For the case in \eqref{LagAlg}, the surface operator is trivial, $S=\mathds{1}$.

We can then work out the domain wall theory after unfolding since the domain wall theory before unfolding factorizes
\begin{equation}
\CZ(\CC)_A=(\CC_{1_+}\oplus\CC_{g_+})\boxtimes(\CC_{1_-}\oplus\CC_{g_-})~,
\end{equation}
where we factorize the theory into a factor arising from the domain wall between $SU(2)_{4k}$ and $SO(3)_{2k}$ and a second factor arising from the domain wall between $\overline{SU(2)_{4k}}$ and $\overline{SO(3)_{2k}}$. Then, we can perform the pinching trick before or after unfolding. This allows us to find the partons in the non-chiral and chiral theories. We have already determined the non-chiral partons before unfolding. To determine the parton content in the chiral case, we unfold and then pinch, and we expect to have two Abelian $N$-type partons (in addition to a single Abelian $U$-type parton corresponding to $S_{0+2k}$)
\begin{equation}
d_1^2=d_g^2=1~,
\end{equation}
where $g$ is associated with the diagonal $\mathbb{Z}_2\subset\mathbb{Z}_2\times\mathbb{Z}_2$ of the unfolded theory. Closely related results (not phrased in the language of partons) were arrived at for the case $k=1$ by the authors of \cite{huston2023composing} using the techniques sketched out in Appendix \ref{WittNT}.

\section{Wave functions, entanglement bootstrap, consistency rules}\label{EB}
 
In this appendix, we provide a crash course on the entanglement bootstrap view of gapped domain walls and also present some new results.  We focus on the axioms, parton quantum dimensions, and composite domain walls. We also summarize and derive a set of predicted rules that form a dictionary between the algebraic theory of partons and many-body wave functions. While the discussion remains mathematically accurate, some detailed proofs are omitted for brevity. All the figures in this section are in space, and the partitions are for a many-body wave function.

The starting point of the entanglement bootstrap is a vacuum (many-body reference state) of the domain wall that satisfies the axioms {\bf A0} and {\bf A1} in the bulk and the vicinity of the wall~\cite{ShiKim2021}. These axioms give rise to a \emph{simplex} of locally indistinguishable density matrices on $N$-shaped and $U$-shaped regions. The extreme points of such a simplex are the parton sectors. This construction underpins the intuition of pinching in Sec.~\ref{sec:pinching}.

\subsection{Parton quantum dimension in entanglement bootstrap}\label{app:dn-M-N}

We propose the following definition of parton quantum dimensions, $d_n$, for defects between different domain wall species {\bf M} and {\bf N}. Given a parton state, $\rho^n$, on the $N$-shape region $BCD$, we let
 \begin{equation}\label{eq:dn-EB-new}
    d_n :=  \exp{\left(\frac{\Delta(B,C,D)_{\rho^n}}{4}\right)},\,\, \vcenter{\hbox{\includegraphics[width=0.42\linewidth]{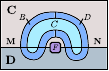}}}
\end{equation}
 where
 \begin{equation}
     \Delta(B,C,D) := S_{BC} + S_{CD} - S_B - S_D~,
 \end{equation}
 and $S(\rho_X)=-\Tr(\rho_X \log \rho_X)$ is the von Neumann entropy. Formula~\eqref{eq:dn-EB-new} is an analog of the anyon quantum dimension applicable to the figure-eight annulus~\cite{figure8}. Note that $d_n \ge 1$ follows immediately from this definition by the strong subadditivity of the von Neumann entropy.
 The motivation behind this definition is twofold:
\begin{enumerate}[leftmargin=13pt]
    \item This definition only needs a single quantum state. It does not require knowing the entropy difference between the parton state, $\rho^n$, and the vacuum, $\rho^1$, as in the original reference~\cite{ShiKim2021}. 
    \item This definition works when ${\bf M \ne N}$, and it broadens the scope of the original definition. For ${\bf M \ne N}$, there may not be a vacuum state (or even an Abelian state). Definition \eqref{eq:dn-EB-new} is equivalent to the original definition in the contexts where the original definition was proposed (i.e., when ${\bf M = N}$).
\end{enumerate}
In the main text, we discused the parton quantum dimension from the categorical perspective for ${\bf M = N}$ (see Sec.~\ref{sec:d_n} and especially Theorem~\ref{thm:d_n-M-M}). We leave a generalization of this analysis applicable to the more general definition~\eqref{eq:dn-EB-new} for the future.

\subsection{Fused domain walls}\label{app:fused-wall-EB}

From the viewpoint of the main text, the parton type and the domain wall type are closely related. The key formula
 \begin{equation}\label{eq:wall-fusion-rule}
  \calW^{{\bf D}\to {\bf C}} \times \left(\calW^{{\bf D}\to {\bf C}}\right)^\dagger = \bigoplus_{n\in \calC_N} \calW_n^{{\bf D\to D}}~,
\end{equation}
says that the possible indecomposable bimodule summands are in one-to-one correspondence with the parton type. 
We explain the intuition behind this formula from the viewpoint of quantum states. 

Suppose we are given a wave function of a pinched domain wall as in described in Fig.~\ref{pinching-trick}. We examine the pinched (or fused) domain wall wave function by studying its local entropy conditions, as shown in Fig.~\ref{fig:composite-wall-EB} (note that we may have ${\bf M} \ne {\bf N}$ as in the main text). If a single parton state is detected by pinching the domain wall sandwich, then one can verify certain entropy conditions in the partition of Fig.~\ref{fig:composite-wall-EB}(a), (namely $\delta_0=0$ and $\delta_1=0$), that guarantee the fused domain wall is well-defined in the sense of the entanglement bootstrap. A well-defined domain wall in the entanglement bootstrap sense should be understood as an indecomposable bimodule in category theory \footnote{It is also possible to have a many-body state corresponding to a superposition of multiple parton species. Such a state will violate axiom {\bf A0} of the entanglement bootstrap on the wall, (i.e., $\delta_0 > 0$ for Fig.~\ref{fig:composite-wall-EB}). It will not correspond to an indecomposable bimodule.}; see Table~\ref{Table:domain-wall-EB}.

\begin{figure}[h]
    \centering
    \includegraphics[width=0.99\linewidth]{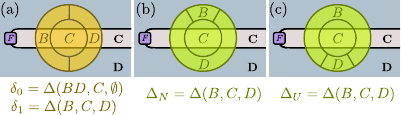}
    \caption{In accordance with the pinching trick Fig.~\ref{pinching-trick}, a pair of domain walls bent towards each other can be viewed as a fused wall at a large length scale. (a) Entropy conditions on the yellow partition $\delta_0$ and $\delta_1$ determine if the fused domain wall is well-defined. (b) and (c) The entropy conditions on the pair of green partitions $\Delta_N$ and $\Delta_U$ determine if the wall is transparent~\cite{ShiKim2021}.}
    \label{fig:composite-wall-EB}
\end{figure}

The next two quantities, $\Delta_{N}$ and $\Delta_U$, in Figs.~\ref{fig:composite-wall-EB}(b) and (c) respectively, are known to compute the total quantum dimensions of the parton species corresponding to the fused wall via $\exp({\Delta_{N}/2})$ and $\exp({\Delta_{U}/2})$. When both $\Delta_{N}$ and $\Delta_{U}$ vanish, the fused domain wall is \emph{transparent} in the sense of the entanglement bootstrap. (This notion corresponds to the statement that the domain wall is invertible in category theory language; see Table~\ref{Table:domain-wall-EB}.) Physically, the domain wall is transparent because, in this case, one cannot determine the location of the wall by entropy conditions, and the isomorphism theorem~\cite{shi2020fusion} guarantees the smoothness of passing information across the wall.  Importantly, a fused wall being transparent or not is a property understood at a sufficiently coarse-grained length scale. Analogously, a glass may be transparent to visible light, but it may not be transparent to UV radiation.

\begin{table}[h]
\begin{tabular}{ |p{4.0cm}||p{3.7cm}|}
\hline
Entanglement bootstrap & Category theory / generalized symmetries\\
 \hline
 Well-defined domain wall
 \begin{equation}
     \delta_0= \delta_1=0, \text{ in Fig.~\ref{fig:composite-wall-EB}} \nonumber
 \end{equation} & Indecomposable  bimodule / simple topological surface\\
 \hline  
 Transparent domain wall
 \begin{eqnarray}
     \delta_0 &=& \delta_1=0, \text{ in Fig.~\ref{fig:composite-wall-EB}} \nonumber\\
     \Delta_N &=& \Delta_U=0, \text{ in Fig.~\ref{fig:composite-wall-EB}} \nonumber
 \end{eqnarray}   & Invertible bimodule / invertible topological surface \,\,\,\, \\
 \hline  
\end{tabular}
\caption{ Mapping terminology regarding domain walls between the entanglement bootstrap and category theory / generalized symmetries.}
\label{Table:domain-wall-EB}
\end{table}

As we discussed in the main text, the relation between parton quantum dimensions and the properties of fused domain walls is subtle (see also Appendix~\ref{app:exmp-non-chiral} for concrete examples and discussions illustrating this point). In particular, we cannot determine the parton quantum dimensions from the fused walls alone. This subtlety is also reflected in the way we compute the parton quantum dimension via the wave function. None of the configurations in Fig.~\ref{fig:composite-wall-EB} are capable of computing $d_n$, and one needs to cut inside the fused wall as in Fig.~\ref{fig:composite-wall-dn}.

\begin{figure}[h]
    \centering
    \includegraphics[width=0.92\linewidth]{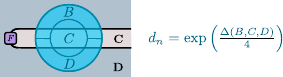}
    \caption{The partition that computes the quantum dimension of parton $n$. The regions $B,C,D$ are deformed versions of those in \eqref{eq:dn-EB-new}. The distinction between this partition and the one in Fig.~\ref{fig:composite-wall-EB} is that the here $B$ and $D$ know the details inside the narrow (pinched)  region between the original domain walls. Such fine detail is generally unavilable after coarse-graining to large length scales, and thus is not a property of the fused wall.}  
    \label{fig:composite-wall-dn}
\end{figure}

\subsection{Consistency rules}

In this final appendix, we translate a few predictions of the algebraic theory of partons in the main text to the world of many-body wave functions. Certain formulas previously derived within the entanglement bootstrap are justified by our algebraic theory. Note that the formulas in this section can be treated as predictions for many-body wave functions, and can, in principle, be tested. 

According to Fig.~\ref{fig:sandwich} and the discussion around it, the most elementary types of gapped domain walls are those obtained by anyon condensation. Other domain walls can be thought of as composites (up to the action of invertible surfaces). We shall focus on condensation domain walls throughout this appendix and assume that any defect we study does not change the domain wall type (i.e., ${\bf M}={\bf N}$). Note that, by a condensation domain wall, we mean a domain wall between ${\bf C}= \calZ(\calC)$ and ${\bf D}= \calZ(\calC)_A^{\text{loc}}$ obtained from gauging $A$ in half of spacetime. For chiral theories, condensation domain walls can be defined similarly (e.g., see  \cite{fuchs2013bicategories,Huston2022,davydov2013structure}).

Given a condensation domain wall, there is a unique $U$-type parton, and thus $\calD_{U}:= \sqrt{\sum_u d_u^2}=1$. The total quantum dimension of $N$-type partons, $\calD_{N}:= \sqrt{\sum_n d_n^2}$, obeys
\begin{equation}\label{eq:rule3}
     \sqrt{{\calD_{\bf C}}/{\calD_{\bf D}}} = {\calD_N}~,
\end{equation}
where $\CD_{\bf C}:=\sqrt{\sum_{a\in\calC_{\bf C}}d^2_a}$, and $\CD_{\bf D}:=\sqrt{\sum_{x\in \calC_{\bf D}}d^2_x}$. This formula follows from \eqref{dn-and-du-def}, which we reproduce below for completeness
\begin{equation} \label{eq:rule-grading}
d_n^2 = {\sum_{\alpha \in \mathcal{L}^{[n,1]}_O} d_\alpha^2\over\sum_{\alpha \in \mathcal{L}^{[1,1]}_O} d_\alpha^2}~.
\end{equation}
Indeed, combining this equation with the fact that $\sum_{F\in\mathcal{L}_O^{[1,1]}}d^2_F=\CD^2_{\bf D}$, and the fact that $\sum_{n,F\in\mathcal{L}_O^{[n,1]}}=\CD_{\bf C}\CD_{\bf D}$, we arrive at \eqref{eq:rule3}. 
In fact, it is also true that~\cite{Frohlich:2003hm}
\begin{equation}\label{FPdimRel}
     d_A: =\sum_{a\in A}N_ad_a = {\calD_{\bf C}}/{\calD_{\bf D}}~.
\end{equation}
Here, $d_a$ is the quantum dimension of anyon $a\in{\bf C}$, $N_a$ is the dimension of the junction space between anyon $a$ and the domain wall, and 
 $d_A$ is the quantum dimension of the object underlying the condensation algebra, $A= \sum_a N_a a$.
Together with Eq.~\eqref{eq:rule3}, we get
\begin{equation}\label{eq:rule1}
    \sum_{n\in \calC_N} d_n^2 = \sum_{a\in A} N_a d_a~.
\end{equation}
This formula was also derived in the entanglement bootstrap in \cite{ShiKim2021}.

Finally, another basic computation in the entanglement bootstrap using pairing manifolds~\cite{immersion2023} shows that the number of $N$-type partons can be counted via the $N_a$ described above
\begin{equation}\label{eq:rule2}
|\calC_N|  = \sum_{a \in A} (N_a)^2~. 
\end{equation}
We can separately argue for this formula using the identification between parton sectors and possible fusion channels of domain walls, as demonstrated in Fig. \ref{Ashrink} and Fig. \ref{provingB3}.

\begin{figure}[h]
    \centering
    \includegraphics[width=0.95\linewidth]{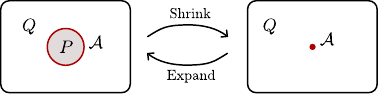}
    \caption{Consider a spatial slice with two topological phases, $Q$ and $P\cong Q_A^{\rm loc}$, separated by a domain wall, $\CA$ (we abuse notation and use $\CA$ to also label the condensable algebra in $Q$ that we sum over to produce $P$). If we shrink this domain wall, then we obtain a (non-simple) anyon, $\CA \in Q$. We can also go in the opposite direction and expand the point excitation, $\CA$, into a domain wall between $Q$ and $P$. See \cite{Kaidi:2021gbs} for the case in which $P$ is trivial (i.e., the domain wall is a gapped boundary).}
    \label{Ashrink}
\end{figure}

\begin{figure}[h]
\centering
\includegraphics[width=0.98\linewidth]{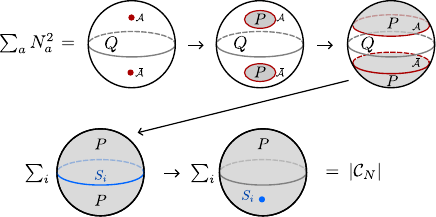}
\caption{The proof of \eqref{eq:rule2}. We consider the 3-manifold $X\cong S^2\times S^1$ (the $S^1$ is not depicted). Inserting the anyons $\CA$ and $\bar\CA=\CA$ from Fig.~\ref{Ashrink} computes the RHS of \eqref{eq:rule2} by fusing $\CA\times\bar\CA$. Alternatively, we can  expand $\CA$ and $\bar\CA$ as in Fig. \ref{Ashrink} and then fuse the corresponding domain walls. This maneuver produces the sum over indecomposable $S_i$ surfaces in $Q$. These surfaces can, in turn, be shrunk to points where we produce anyons, $A_i:=S_i(1)$, that satisfy $\dim({\rm Hom}(A_i,1))=1$ (this latter statement can be seen from folding and considering the corresponding Lagrangian algebra). Since only the unit contributes from each $S_i(1)$, and we are left with $|\CC_N|$ copies of the $S^2\times S^1$ partition function. We have established \eqref{eq:rule2}.}
\label{provingB3}
\end{figure}

\bibliographystyle{ucsd}
\bibliography{ref}  
\end{document}